\def \VersionAuthor {}
\ifdefined\VersionAuthor
	\newcommand{\AuthorVersion}[1]{#1}
	\newcommand{\FinalVersion}[1]{}
\else
	\newcommand{\AuthorVersion}[1]{}
	\newcommand{\FinalVersion}[1]{#1}
\fi

\documentclass[a4paper,10pt]{llncs}
\makeatletter
\AtBeginDocument{%
  \@ifpackageloaded{hyperref}
  {\def\@doi#1{\href{https://doi.org/#1}
      {\ttfamily https://doi.org/#1}\egroup}}
  {\def\@doi#1{\ttfamily https://doi.org/#1\egroup}}
  \def\doi{\bgroup\catcode`\_=12\relax\@doi}}
\makeatother

\usepackage[utf8]{inputenc}
\usepackage[english]{babel}
\usepackage[ruled,vlined,linesnumbered]{algorithm2e}
	\SetKwInOut{Input}{input}
	\SetKwInOut{Output}{output}
	\SetKwProg{Fn}{Function}{}{end}

\usepackage{subcaption}
\usepackage{lscape}

\usepackage{paralist} %

\newenvironment{ienumerate}
	{\ifdefined\VersionLong\begin{enumerate}\else\begin{inparaenum}[\itshape i\upshape)]\fi}
	{\ifdefined\VersionLong\end{enumerate}\else\end{inparaenum}\fi}

\newenvironment{oneenumerate}
	{\ifdefined\VersionLong\begin{enumerate}\else\begin{inparaenum}[1)]\fi}
	{\ifdefined\VersionLong\end{enumerate}\else\end{inparaenum}\fi}

\usepackage{amsmath} %
\usepackage{amssymb} %

\usepackage[misc,geometry]{ifsym} %
\usepackage{varwidth}
\ifdefined \VersionLong
	\newcommand{\LongVersion}[1]{#1}
	\newcommand{\ShortVersion}[1]{}
\else
	\newcommand{\LongVersion}[1]{}
	\newcommand{\ShortVersion}[1]{#1}
\fi

\ifdefined\VersionAuthor
	\usepackage[backend=biber,backref=true,style=alphabetic,url=false,doi=true,defernumbers=true,sorting=anyt,maxnames=99]{biblatex} %
	\addbibresource{biblio.bib}

	\renewbibmacro*{doi+eprint+url}{%
		\iftoggle{bbx:doi}
			{\color{black!40}\footnotesize\printfield{doi}}
			{}%
		\newunit\newblock
		\iftoggle{bbx:eprint}
			{\usebibmacro{eprint}}
			{}%
		\newunit\newblock
		\iftoggle{bbx:url}
			{\usebibmacro{url+urldate}}
			{}%
	}

\fi
\usepackage[svgnames,table]{xcolor}
\definecolor{darkblue}{rgb}{0, 0, 0.7}

\usepackage[
		pdfauthor={Etienne Andre, Dylan Marinho, Laure Petrucci, Jaco van de Pol},%
		pdftitle={Efficient Convex Zone Merging in Parametric Timed Automata},
		breaklinks  = true,
		colorlinks  = true,
	\ifdefined \VersionWithComments
	\fi
		citecolor   = blue!50!blue,
		linkcolor   = darkblue,
		urlcolor    = blue!50!green,
	]{hyperref}

\usepackage[capitalise,english,nameinlink]{cleveref} %
\crefname{line}{\text{line}}{\text{lines}} %

\usepackage{tikz}
\usetikzlibrary{arrows,automata,shapes,calc}
\tikzstyle{pta}=[auto, ->, >=stealth']
\tikzstyle{PZG}=[auto, ->, >=stealth']
\tikzstyle{every node}=[initial text=]
\tikzstyle{location}=[circle, minimum size=12pt, draw=black, fill=blue!10, inner sep=2pt]
\tikzstyle{invariant}=[draw=black, dotted, inner sep=1pt] %
\tikzstyle{symbstate} = [draw, rectangle, rounded corners]
\tikzstyle{mergingFigure} = [>=stealth', node distance=1.8cm, yscale=.6]

\usepackage{rotating}
\usepackage{multirow,array}
\newcolumntype{M}[1]{>{\centering\arraybackslash}m{#1}}

\newcommand{\rowHeader}{\rowcolor{blue!20}}

\newcommand{\cellBest}{\cellcolor{green!80}\bfseries}
\newcommand{\cellOne}{\cellcolor{green!50}}
\newcommand{\cellTwo}{\cellcolor{green!40}}
\newcommand{\cellThree}{\cellcolor{green!30}}

\newcommand{\cellFive}{\cellcolor{green!10}}

\usepackage{thmtools,thm-restate}
\declaretheorem[name=Proposition]{prop} %
\crefname{prop}{\text{Proposition}}{\text{Propositions}}

\newcommand{\init}{_0}

\newcommand{\A}{\ensuremath{\mathcal{A}}}
\newcommand{\Actions}{\Sigma}
\newcommand{\action}{\ensuremath{a}}
\newcommand{\assign}{\leftarrow}
\newcommand{\BTrue}{\text{true}}
\newcommand{\BFalse}{\text{false}}
\newcommand{\Clock}{\mathbb{X}} %
\newcommand{\ClockCard}{H} %
\newcommand{\clock}{x} %
\newcommand{\clockval}{\mu} %
\newcommand{\ClocksZero}{\vec{0}}
\newcommand{\compOp}{\bowtie}

\newcommand{\edge}{e}
\newcommand{\Edges}{E}
\newcommand{\longuefleche}[1]{\stackrel{#1}{\longrightarrow}}
\newcommand{\longueflecheRel}[1]{\stackrel{#1}{\mapsto}}

\newcommand{\flecheRel}{{\rightarrow}}

\newcommand{\guard}{g}

\newcommand{\invariant}{I}

\newcommand{\loc}{\ensuremath{\ell}} %
\newcommand{\locinit}{\loc\init}
\newcommand{\Loc}{L} %

\newcommand{\lterm}{\mathit{lt}}
\newcommand{\Param}{\mathbb{P}} %
\newcommand{\param}{p} %
\newcommand{\ParamCard}{M} %
\newcommand{\pval}{v} %
\newcommand{\sinit}{s\init} %
\newcommand{\concstate}{\ensuremath{s}} %
\newcommand{\States}{S} %

\newcommand{\SuccE}{\mathsf{SuccE}}
\newcommand{\timelapse}[1]{#1^\nearrow}

\newcommand{\TTS}{\ensuremath{T}}
\newcommand{\styleSymbStatesSet}[1]{\ensuremath{\mathbf{#1}}}
\newcommand{\Constr}{\ensuremath{\styleSymbStatesSet{C}}}
\newcommand{\symbstate}{\ensuremath{\styleSymbStatesSet{s}}}  \newcommand{\symbstatey}{\ensuremath{\styleSymbStatesSet{y}}} %
\newcommand{\SymbStates}{\ensuremath{\styleSymbStatesSet{S}}} %
\newcommand{\symbstateinit}{\symbstate\init} %
\newcommand{\symbtrans}{\ensuremath{\styleSymbStatesSet{t}}} 
\newcommand{\SymbTransitions}{\Rightarrow} %
\newcommand{\PZG}{\ensuremath{\styleSymbStatesSet{PZG}}} %
\newcommand{\fieldConstr}{\ensuremath{\mathit{constr}}}
\newcommand{\fieldLoc}{\ensuremath{\mathit{loc}}}
\newcommand{\fieldSource}{\ensuremath{\mathit{source}}}
\newcommand{\fieldTarget}{\ensuremath{\mathit{target}}}

\newcommand{\HiLi}[2][yellow]{\begingroup
	\setlength{\fboxsep}{1pt}
	\colorbox{#1}{#2}
	\endgroup}

\newcommand{\stylealgo}[1]{\ensuremath{\mathsf{#1}}}

\newcommand{\mergeSets}{\ensuremath{\mathit{mergeSets}}}
\newcommand{\mergeonestate}{\ensuremath{\mathit{mergeOneState}}}
\newcommand{\ismergeable}{\ensuremath{\mathit{is\_mergeable}}}
\newcommand{\getsiblings}{\ensuremath{\mathit{getSiblings}}}
\newcommand{\looksiblings}{\ensuremath{\styleSymbStatesSet{SiblingCandidates}}}
\newcommand{\candidates}{\ensuremath{\styleSymbStatesSet{Candidates}}}

\newcommand{\bfsbylayer}{\ensuremath{\stylealgo{layerBFS}}}
\newcommand{\visited}{\ensuremath{\styleSymbStatesSet{Visited}}\xspace}
\newcommand{\queue}{\ensuremath{\styleSymbStatesSet{Queue}}\xspace}
\newcommand{\ismerged}{\ensuremath{\mathit{isMerged}}}
\newcommand{\queuenew}{\ensuremath{\styleSymbStatesSet{Q}_\mathit{new}}}

\newcommand{\EFsynth}{\ensuremath{\stylealgo{EFsynth}}}

\newcommand{\resets}{R}
\newcommand{\projectP}[1]{\ensuremath{#1{\downarrow_{\Param}}}}
\newcommand{\reset}[2]{\ensuremath{[#1]_{#2}}}
\newcommand{\valuate}[2]{\ensuremath{#2(#1)}}

\newcommand{\styleBen}[1]{\texttt{#1}}

\newcommand{\namevisited}{\texttt{Visited}}
\newcommand{\namequeue}{\texttt{Queue}}
\newcommand{\nameordered}{\texttt{Ordered}}
\newcommand{\namefly}{\texttt{Otf}}

\newcommand{\nomerge}{\texttt{Nomerge}}
\newcommand{\mergeTwoTwelve}{\texttt{M2.12}}
\newcommand{\mergeRVMr}{\texttt{RVMr}}
\newcommand{\mergeRVM}{\texttt{RVM}}
\newcommand{\mergeRVCr}{\texttt{RVCr}}
\newcommand{\mergeRVC}{\texttt{RVC}}
\newcommand{\mergeRQMr}{\texttt{RQMr}}
\newcommand{\mergeRQM}{\texttt{RQM}}
\newcommand{\mergeRQCr}{\texttt{RQCr}}
\newcommand{\mergeRQC}{\texttt{RQC}}
\newcommand{\mergeROMr}{\texttt{ROMr}}
\newcommand{\mergeROM}{\texttt{ROM}}
\newcommand{\mergeROCr}{\texttt{ROCr}}
\newcommand{\mergeROC}{\texttt{ROC}}
\newcommand{\mergeOVMr}{\texttt{OVMr}}
\newcommand{\mergeOVM}{\texttt{OVM}}
\newcommand{\mergeOVCr}{\texttt{OVCr}}
\newcommand{\mergeOVC}{\texttt{OVC}}
\newcommand{\mergeOQMr}{\texttt{OQMr}}
\newcommand{\mergeOQM}{\texttt{OQM}}
\newcommand{\mergeOQCr}{\texttt{OQCr}}
\newcommand{\mergeOQC}{\texttt{OQC}}
\newcommand{\mergeOOMr}{\texttt{OOMr}}
\newcommand{\mergeOOM}{\texttt{OOM}}
\newcommand{\mergeOOCr}{\texttt{OOCr}}
\newcommand{\mergeOOC}{\texttt{OOC}}

\newcommand{\setQ}{\ensuremath{{\mathbb Q}}}
\newcommand{\setQplus}{\ensuremath{\setQ_{\geq 0}}} %
\newcommand{\setR}{\ensuremath{\mathbb R}}
\newcommand{\setRgeqzero}{\ensuremath{\setR_{\geq 0}}}

\newcommand{\setRplus}{\ensuremath{\setR_{\geq 0}}} %
\newcommand{\setZ}{\ensuremath{\mathbb Z}}

\newcommand{\imitator}{\textsf{IMITATOR}}
\newcommand{\uppaal}{\textsc{Uppaal}}

\ifdefined \VersionWithComments
 	\definecolor{colorok}{RGB}{80,80,150}
\else
	\definecolor{colorok}{RGB}{0,0,0}
\fi

\newcommand{\eg}{\textcolor{colorok}{e.g.}\xspace}
\newcommand{\etal}{\textcolor{colorok}{\emph{et al.}}\xspace}
\newcommand{\ie}{\textcolor{colorok}{i.e.}\xspace}
\newcommand{\st}{\textcolor{colorok}{s.t.}\xspace}

\newcommand{\wrt}{\textcolor{colorok}{w.r.t.}\xspace}

\makeatletter
\def\orcidID#1{\smash{\href{https://orcid.org/#1}{\protect\raisebox{-1.25pt}{\protect\includegraphics{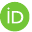}}}}}
\makeatother

\title{Efficient Convex Zone Merging\\in Parametric Timed Automata\thanks{%
	\AuthorVersion{%
		This is the author version of the manuscript of the same name published in the proceedings of the 20th International Conference on Formal Modeling and Analysis of Timed Systems (\href{https://conferences.ncl.ac.uk/formats2022/}{FORMATS 2022}).
		The final version is available at \href{https://www.doi.org/10.1007/978-3-031-15839-1_12}{ \nolinkurl{10.1007/978-3-031-15839-1_12}}.
	}
	This work is partially supported by the ANR-NRF French-Singaporean research program \href{https://www.loria.science/ProMiS/}{ProMiS} (ANR-19-CE25-0015)
	and CNRS-INS2I project TrAVAIL.
	}
}
\author{%
	\'Etienne Andr\'e\inst{1}%
	\raisebox{1ex}{\scalebox{0.8}{\orcidID{0000-0001-8473-9555}}}
	\and
	Dylan Marinho\inst{1}%
	\raisebox{1ex}{\scalebox{0.8}{\orcidID{0000-0002-2548-6196}}}
	\and
	Laure Petrucci\inst{2}%
	\raisebox{1ex}{\scalebox{0.8}{\orcidID{0000-0003-3154-5268}}}
	\and
	Jaco van de Pol\inst{3}%
	\raisebox{1ex}{\scalebox{0.8}{\orcidID{0000-0003-4305-0625}}}
}
\institute{%
	Université de Lorraine, CNRS, Inria, LORIA, F-54000 Nancy, France
	\and
	LIPN, CNRS UMR 7030, Université Sorbonne Paris Nord, Villetaneuse, France
	\and
	Aarhus University, Aarhus, Denmark
}

\begin{document}
\sloppy

\AuthorVersion{%
	\pagestyle{plain}
}

\maketitle{}

\AuthorVersion{%
	\thispagestyle{plain}
}

\begin{abstract}
	Parametric timed automata are a powerful formalism for reasoning on concurrent real-time systems with unknown or uncertain timing constants. 
	Reducing their state space is a significant way to reduce the inherently large analysis times.
	We present here different merging reduction techniques based on convex union of constraints (parametric zones), allowing to decrease the number of states while preserving the correctness of verification and synthesis results.
	We perform extensive experiments, and identify the best heuristics in practice, bringing a significant decrease in the computation time on a benchmarks library.
	
	\AuthorVersion{%
	\keywords{parametric timed model checking  \and parameter synthesis \and convex merging.}
	}
\end{abstract}
\section{Introduction}\label{section:introduction}

Parametric timed automata (PTAs)~\cite{AHV93} are a powerful extension of timed automata (TAs)~\cite{AD94} with timing parameters, allowing to reason on concurrent real-time systems with unknown or uncertain timing constants.
PTAs go beyond the expressiveness of the classical model checking problem of TAs (with a binary ``yes''/``no'' answer), and can address \emph{parameter synthesis}, \ie{} the exhibition of valuations for these timing parameters \st{} a given property holds.
A common problem (addressed here) is that of \emph{reachability synthesis}: ``synthesize parameter valuations such that a given location is reachable''.

PTAs are an inherently expressive but hard formalism, in the sense that most decision problems are undecidable (see \eg{} \cite{Andre19STTT} for a survey), while verification and parameter synthesis are subject to the infamous state space explosion in practice.
Reducing the state space, built on-the-fly when performing parameter synthesis, is a significant way to reduce the sometimes large computation times.

The symbolic semantics of TAs is often represented as \emph{zones}, \ie{} linear constraints over the clocks with a special form.
In~\cite{David05}, a convex zone merging technique is presented for \uppaal{}, that preserves reachability properties.
This merging technique was extended to PTAs in~\cite{AFS13atva}, and applied to the symbolic semantics of PTAs in the form of parametric zones, \ie{} linear constraints over the clocks and the parameters, obeying to a special form~\cite{HRSV02}.
In~\cite{AFS13atva}, the analysis is only performed in the framework of the ``inverse
method'' (IM, also called ``trace preservation synthesis''~\cite{ALM20}); no other
properties are considered.

\paragraph{Contributions.}
We propose here different merging techniques for PTAs, with the goal to reduce the state space size and/or the analysis time. %
We implement our techniques in \imitator{}~\cite{Andre21}, and we perform extensive experiments on a standard benchmarks set~\cite{AMP21}.
It turns out that these various heuristics have very different outcomes in terms of size of the state space and analysis speed.
We then identify the best heuristics in practice, allowing to significantly decrease the number of states and the computation time, while preserving the correctness of the parameter synthesis for the whole class of reachability properties.
The two main differences with~\cite{AFS13atva} are
\begin{ienumerate}%
	\item the definition of merging for reachability synthesis (and not only for IM),
	and
	\item the systematic investigation of new heuristics, leading to a largely increased
	efficiency \wrt{} the original merging of~\cite{AFS13atva}.
\end{ienumerate}%

\paragraph{Related work.}
As said above, merging was first proposed for TAs in~\cite{David05}, and then extended to the ``inverse method'' for PTAs in~\cite{AFS13atva}.
In~\cite{BBM06}, Ben Salah \etal{} show that it is safe to perform the convex merging of various constraints, when they are the result of an interleaving.
The exploration is done in a BFS (breadth-first search) manner, and states are merged at each depth level.

Beyond merging, various heuristics were proposed to efficiently reduce the state space of TAs.
Extrapolation and abstractions were proposed in \cite{AD94,BBLP06,HSW13,HSW16} for TAs, and then extended to PTAs in~\cite{ALR15,BBBC16}.
Exploration orders were discussed in~\cite{HT15} and then in~\cite{ANP17} for PTAs.
The efficiency of model checking liveness properties for TAs is discussed notably in~\cite{HSW12,HSTW20}.
Zone inclusion (subsumption) for liveness checking is discussed for TAs in \cite{LODLV13} and
for PTAs in \cite{AAPP21}. Inclusion/subsumption is a special case of merging.
The other mentioned techniques are orthogonal to the merging technique, and they can be combined.

In addition, computing efficiently exact or over-approximated successors of ``zones'' %
in the larger class of \emph{hybrid automata} (HAs)~\cite{Henzinger96} is an active field of research (\eg{} \cite{CAF11,CSA14,SNA17,BFFPS20}).
Beyond the target formalism (PTAs instead of HAs), a main difference is that we are concerned here exclusively with an \emph{exact} analysis.

\AuthorVersion{%
\paragraph{Outline}
We recall the necessary concepts %
in \cref{section:preliminaries}.
Several merging heuristics are proposed in \cref{section:merging} and evaluated
in \cref{section:experiments}.
We conclude in \cref{section:conclusion}.
}

\section{Preliminaries}\label{section:preliminaries}

We assume a set~$\Clock = \{ \clock_1, \dots, \clock_\ClockCard \} $ of \emph{clocks}, \ie{} real-valued variables that evolve over time at the same rate.
A \emph{clock valuation} is a function
$\clockval : \Clock \rightarrow \setRgeqzero$.
The clock valuation $\ClocksZero$ assigns $0$ to all clocks.
Given a delay $d \in \setRgeqzero$, $\clockval + d$ denotes the valuation $(\clockval + d)(\clock) = \clockval(\clock) + d$, for $\clock \in \Clock$.
Given $\resets \subseteq \Clock$, we define the \emph{reset} of valuation~$\clockval$ by $\reset{\clockval}{\resets}(\clock) = 0$ if $\clock \in \resets$, and $\reset{\clockval}{\resets}(\clock)=\clockval(\clock)$, otherwise.

We assume a set~$\Param = \{ \param_1, \dots, \param_\ParamCard \} $ of \emph{parameters}, \ie{} unknown constants.
A linear term is of the form $\sum_{1 \leq i \leq \ClockCard} \alpha_i \clock_i + \sum_{1 \leq j \leq \ParamCard} \beta_j \param_j + d$, with
	$\alpha_i, \beta_j, d \in \setZ$.
A \emph{constraint}~$\Constr$ (\ie{} a convex polyhedron) over $\Clock \cup \Param$ is a conjunction of inequalities of the form $\lterm \compOp 0$, where $\lterm$ is a linear term
and ${\compOp} \in \{<, \leq, =, \geq, >\}$.

A \emph{parameter valuation} $\pval$ is a function $\pval : \Param \rightarrow \setQplus$. 
Given a parameter valuation~$\pval$, $\valuate{\Constr}{\pval}$ denotes the constraint over~$\Clock$ obtained by replacing each parameter~$\param$ in~$\Constr$ with~$\pval(\param)$.
Likewise, given a clock valuation~$\clockval$, $\valuate{\valuate{\Constr}{\pval}}{\clockval}$ denotes the expression obtained by replacing each clock~$\clock$ in~$\valuate{\Constr}{\pval}$ with~$\clockval(\clock)$.
We write $\clockval \models \valuate{\Constr}{\pval}$ if $\valuate{\valuate{\Constr}{\pval}}{\clockval}$ evaluates to true.
We say that $\Constr$ is \emph{satisfiable} if $\exists \clockval, \pval \text{ s.t.\ } \clockval \models \valuate{\Constr}{\pval}$.

\begin{definition}[PTA~\cite{AHV93}]\label{def:uPTA}
	A PTA $\A$ is a tuple \mbox{$\A = (\Actions, \Loc, \locinit, \Clock, \Param, \invariant, \Edges)$}, where: %
	\begin{ienumerate}
		\item $\Actions$ is a finite set of actions,
		\item $\Loc$ is a finite set of locations,
		\item $\locinit \in \Loc$ is the initial location,
		\item $\Clock$ is a finite set of clocks,
		\item $\Param$ is a finite set of parameters,
		\item $\invariant$ is the invariant, assigning to every $\loc\in \Loc$ a constraint $\invariant(\loc)$,
		\item $\Edges$ is a finite set of edges  $\edge = (\loc,\guard,\action,\resets,\loc')$
		where~$\loc,\loc'\in \Loc$ are the source and target locations, $\action \in \Actions$, $\resets\subseteq \Clock$ are the set of clocks to be reset, and the guard $\guard$ is a constraint.
	\end{ienumerate}
\end{definition}

Given\ a parameter valuation~$\pval$, $\valuate{\A}{\pval}$ denotes the non-parametric TA~\cite{AD94}, where all occurrences of any parameter~$\param_i$ have been replaced by~$\pval(\param_i)$.
\begin{definition}[Concrete semantics]\label{def:semantics:TA}
	Given PTA $\A = (\Actions, \Loc, \locinit, \Clock, \Param, \invariant, \Edges)$, %
	and a parameter valuation~\(\pval\),
	the concrete semantics of $\valuate{\A}{\pval}$ is given by the timed transition system (TTS)~\cite{HMP91,AD94} $\TTS_{\valuate{\A}{\pval}}=(\States, \sinit, \smash{{\longueflecheRel{\edge}}\cup{\longueflecheRel{d}}})$, with
	\begin{itemize}
		\item $\States = \{ (\loc, \clockval) \in \Loc \times \setRgeqzero^\ClockCard \mid \clockval \models \valuate{\invariant(\loc)}{\pval} \}$, %
		\item initial state $\sinit = (\locinit, \ClocksZero) $,
			\item discrete transitions: $(\loc,\clockval) \longueflecheRel{\edge} (\loc',\clockval')$, %
				if $(\loc, \clockval) , (\loc',\clockval') \in \States$, and there exists ${\edge = (\loc,\guard,\action,\resets,\loc') \in \Edges}$, such that $\clockval'= \reset{\clockval}{\resets}$, and $\clockval\models\pval(\guard$).
			\item delay transitions: $(\loc,\clockval) \longueflecheRel{d} (\loc, \clockval+d)$ for $d \in \setRgeqzero$, if $\forall d' \in [0, d], (\loc, \clockval+d') \in \States$.
	\end{itemize}
We write $(\loc, \clockval)\longuefleche{(d, \edge)} (\loc',\clockval')$ for a combined step:
		$\exists  \clockval'' :  (\loc,\clockval) \longueflecheRel{d} (\loc,\clockval'') \longueflecheRel{\edge} (\loc',\clockval')$.
\end{definition}

Given a TA~$\valuate{\A}{\pval}$ with concrete semantics $\TTS_{\valuate{\A}{\pval}} = (\States, \sinit, \flecheRel)$, we refer to the states of~$\TTS_{\valuate{\A}{\pval}}$ as the \emph{concrete states} of~$\valuate{\A}{\pval}$.
A \emph{run} of~$\valuate{\A}{\pval}$ is a (finite or infinite) alternating sequence of concrete states of $\valuate{\A}{\pval}$ and pairs of 
delay and discrete transitions starting from the initial state~$\sinit$ of the form
$\concstate_0, (d_0, \edge_0), \concstate_1, \cdots$
with
$i = 0, 1, \dots$, $\edge_i \in \Edges$, $d_i \in \setRgeqzero$ and
	$\smash{\concstate_i \longuefleche{(d_i, \edge_i)} \concstate_{i+1}}$.

Given a state~$\concstate = (\loc, \clockval)$, we say that $\concstate$ is reachable in~$\valuate{\A}{\pval}$ if $\concstate$ appears in a run of $\valuate{\A}{\pval}$.
By extension, we say that $\loc$ is reachable in~$\valuate{\A}{\pval}$.

\subsubsection{Symbolic Semantics.}\label{ss:symbolic}
We now recall the symbolic semantics of PTAs (see \eg{} \cite{HRSV02,ACEF09,JLR15}).
Define the \emph{time elapsing} of~$\Constr$, denoted by $\timelapse{\Constr}$, as the constraint over~$\Clock$ and~$\Param$ obtained by delaying all clocks in~$\Constr$ by an arbitrary amount of time.
That is,
\(\clockval' \models \valuate{\timelapse{\Constr}}{\pval} \text{ if } \exists \clockval : \Clock \to \setRplus, \exists d \in \setRplus \text { s.t. } \clockval \models \valuate{\Constr}{\pval} \land \clockval' = \clockval + d \text{.}\)
Given $\resets \subseteq \Clock$, define the \emph{reset} of~$\Constr$, denoted by $\reset{\Constr}{\resets}$, as the constraint obtained from~$\Constr$ by resetting the clocks in~$\resets$ to $0$, keeping other clocks unchanged.
That is,
\[\clockval' \models \valuate{\reset{\Constr}{\resets}}{\pval} \text{ if } \exists \clockval : \Clock \to \setRplus \text { s.t. } \clockval \models \valuate{\Constr}{\pval} \land \forall \clock \in \Clock
	\left \{ \begin{array}{ll}
		 \clockval'(\clock) = 0 & \text{if } \clock \in \resets\\
		 \clockval'(\clock) = \clockval(\clock) & \text{otherwise.}
	\end{array} \right .\]

We denote by $\projectP{\Constr}$ the projection of~$\Constr$ onto~$\Param$, \ie{} obtained by eliminating the variables not in~$\Param$ (\eg{} using Fourier-Motzkin).
The application of one of these operations (time elapsing, reset, projection) to a constraint yields a constraint; existential quantification can be handled, \eg{} by adding variables and subsequently eliminating them using, \eg{} Fourier-Motzkin.

	A symbolic state is a pair $(\loc, \Constr)$ where $\loc \in \Loc$ is a location, and $\Constr$ its associated constraint over~$\Clock \cup \Param$ called \emph{parametric zone}.
\begin{definition}[Symbolic semantics]\label{def:PTA:symbolic}
	Given a PTA $\A = (\Actions, \Loc, \locinit,
	\Clock, \Param, \invariant, \Edges)$, %
	the symbolic semantics of~$\A$ is the labelled transition system called \emph{parametric zone graph}
	$ \PZG = ( \Edges, \SymbStates, \symbstateinit, {\SymbTransitions} )$, with
	\begin{itemize}
		\item $\SymbStates = \{ (\loc, \Constr) \mid \Constr \subseteq \invariant(\loc) \}$, %
		\item $\symbstateinit = \big(\locinit, \timelapse{(\bigwedge_{1 \leq i\leq\ClockCard}\clock_i=0)} \land \invariant(\loc_0) \big)$,
				and
		\item $\big((\loc, \Constr), \edge, (\loc', \Constr')\big) \in {\SymbTransitions} $ if $\edge = (\loc,\guard,\action,\resets,\loc') \in \Edges$ and
			\(\Constr' = \timelapse{\big(\reset{(\Constr \land \guard)}{\resets}\land \invariant(\loc')\big )} \land \invariant(\loc')\)
			with $\Constr'$ satisfiable.
	\end{itemize}
\end{definition}

That is, in the parametric zone graph, nodes are symbolic states, and arcs are labeled by \emph{edges} of the original PTA.
Given a symbolic state~$\symbstate$ reachable in~$\PZG$, we define $\SuccE(\symbstate)$, successors with edges, by $\{ (\edge, \symbstate') \mid (\symbstate, \edge, \symbstate')\in {\SymbTransitions}\}$. 
We also write $\symbstate \SymbTransitions \symbstate'$ to denote that for some $\edge$,
$(\symbstate, \edge, \symbstate') \in {\SymbTransitions}$.
Given $\symbtrans = (\symbstate, \edge, \symbstate') \in {\SymbTransitions}$, $\symbtrans.\fieldSource$ denotes~$\symbstate$ while $\symbtrans.\fieldTarget$ denotes~$\symbstate'$.
Given $\symbstate = (\loc, \Constr)$, $\symbstate.\fieldConstr$ denotes~$\Constr$ while $\symbstate.\fieldLoc$ denotes~$\loc$.
Note that we usually use bold font to denote anything symbolic, \ie{} (sets of) symbolic states, and constraints.

A well-known result~\cite{HRSV02} is that, given a PTA~$\A$ and a reachable symbolic state $(\loc, \Constr)$, if a parameter valuation~$\pval$ belongs to the projection onto the parameters of~$\Constr$ (\ie{} $\pval \in \projectP{\Constr}$), then $\loc$ is reachable in the TA $\valuate{\A}{\pval}$.

The \emph{(symbolic) state space} of a PTA is its parametric zone graph.
This structure is in general infinite, due to the intrinsic undecidability of most decision problems for PTAs.
However, for semi-algorithms for parameter synthesis (without a guarantee of termination), 
it is of utmost importance to \emph{reduce} the size of this state space, so as to perform synthesis more efficiently.

\section{Efficient State Merging in Parametric Timed Automata}\label{section:merging}
\subsection{Merging Algorithm}

We recall the notion of merging from~\cite{AFS13atva}.
Two states are \emph{mergeable} if
\begin{oneenumerate}%
	\item they share the same location, and
	\item the union of their constraints is convex.
\end{oneenumerate}%

\begin{definition}[Merging~\cite{AFS13atva}]
	Two symbolic states $\symbstate_1 = (\loc_1, \Constr_1)$, $\symbstate_2 = (\loc_2, \Constr_2)$ are \emph{mergeable}, denoted by the predicate $\ismergeable(\symbstate_1, \symbstate_2)$, if $\loc_1 = \loc_2$ and $\Constr_1 \cup \Constr_2$ is convex.

	In that case, we define their \emph{merging} %
	as $(\loc_1, \Constr_1\cup\Constr_2)$.
\end{definition}

Merging is a generalisation of inclusion abstraction (also known as subsumption).
Note that if $\symbstate_2$ includes $\symbstate_1$, \ie{} $\Constr_1\subseteq\Constr_2$,
then $\Constr_1\cup\Constr_2=\Constr_2$ is convex, so the states can be merged, and
the result will be~$\symbstate_2$.

\begin{example}
We display examples of 2-dimensional zones in \cref{fig:merge:options:zones}. %
(These box-shaped parametric zones are fictitious and displayed for the purpose of illustration; similar zones, sometimes using ``diagonal'' edges, can be obtained from actual PTAs.)
Zone~$\Constr_1$ can be merged with~$\Constr_4$; $\Constr_2$ can also be merged with~$\Constr_4$.
The result of these two merging operations is shown in \cref{fig:merge:options:m31}.
These two new zones can also be merged together, leading to the zone in \cref{fig:merge:options:m32}.
\end{example}

\begin{example} Let us now consider the PTA in \cref{fig:merge:ex2:PTA}, with
 two clocks ($x$ and $y$) and two parameters ($p$ and $q$).
Both clocks and parameters are initially bound to be non-negative (clocks initially 
different from~0 can be simulated using an appropriate gadget, omitted here).

The PZG of this PTA is shown in \cref{fig:merge:ex2:PZG}.
It features two separate infinite executions which depend on the first chosen transition.
In the upper branch, the first state with location~$\loc_1$ has constraint $p + x
\geq y$ (which can read $y - x \leq p$) since, although we have $y \leq p$ when taking
the transition from~$\loc_0$, time can then elapse in~$\loc_1$---but only up to $p$ time units, due to invariant $x \leq p$.
Then coefficients on~$p$ (\ie{} $2p$ then $3p$, etc.)\ start to appear from the second state with location~$\loc_1$ due to the self-loop on~$\loc_1$ that resets~$x$.

\emph{Inclusion} reduces one of these two symbolic executions, which exhibits
decreasing zones, as in \cref{fig:merge:ex2:PZGincl}.
Even using inclusion, the PZG remains infinite.

Finally, \cref{fig:merge:ex2:PZGmerge} displays the graph obtained with the \emph{merging} approach:
the two states obtained after taking a single transition can be merged.
Here, the PZG with merging becomes finite, which illustrates the importance of merging.

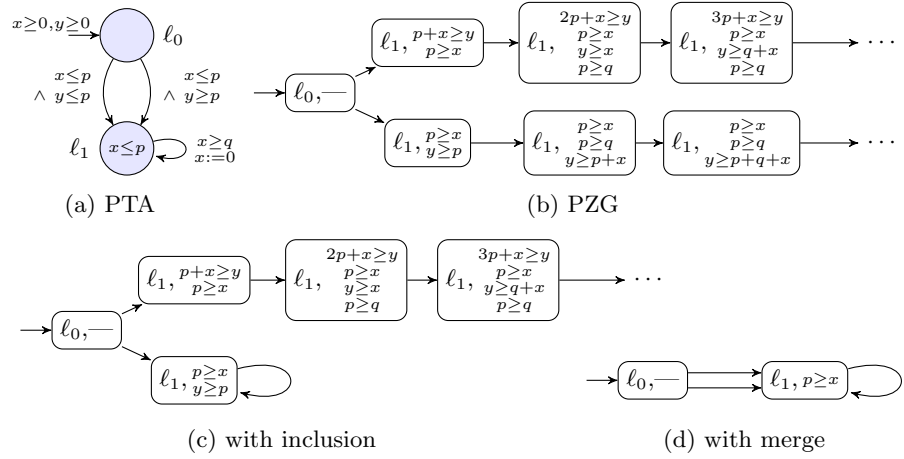
\begin{figure}[tb]
\centering
\begin{subfigure}[b]{0.23\textwidth}
	\centering
	\begin{tikzpicture}[pta, node distance =1.5cm]
	\node[location,initial,
		label={[below,xshift=-1cm]:$\scriptstyle x\geq0, y\geq0$},
		label={right:$\loc_0$}] (q0)
		{\phantom{$\scriptstyle x\leq p$}};
	\node[location,label={left:$\loc_1$}] (q1) [below of=q0]
		{$\scriptstyle x\leq p$};
	\path
		(q0) edge [bend right]
			node [left] {\begin{tabular}{ll}
										 & $\scriptstyle x\leq p$ \\ 
										\raisebox{2mm}{$\scriptstyle\land$} &
											\raisebox{2mm}{$\scriptstyle y\leq p$}
										\end{tabular}} (q1)
		(q0) edge [bend left]
			node [right] {\begin{tabular}{ll}
										 & $\scriptstyle x\leq p$ \\ 
										\raisebox{2mm}{$\scriptstyle\land$} &
											\raisebox{2mm}{$\scriptstyle y\geq p$}
										\end{tabular}} (q1)
		(q1) edge [loop right]
			node [right] {$\scriptstyle \substack{x\geq q\\ x:=0}$} (q1);
	\end{tikzpicture}
	\caption{\label{fig:merge:ex2:PTA}PTA}
\end{subfigure}
\begin{subfigure}[b]{0.75\textwidth}
	\centering
	\begin{tikzpicture}[PZG, node distance=2cm]
		\node[symbstate,initial] (q0)
			{$\loc_0,$---};
		\node[symbstate] (q1)
			[above right of=q0,yshift=-.75cm]
			{$\loc_1,\substack{p+x\geq y \\ p\geq x}$};
		\node[symbstate] (q2)
			[below right of=q0,yshift=0.75cm]
			{$\loc_1,\substack{p\geq x \\ y\geq p}$};
		\node[symbstate] (q3)
			[right of=q1]
			{$\loc_1,
				\substack{2p+x\geq y \\ p\geq x \\ y\geq x \\ p\geq q \\ }$};
		\node[symbstate] (q4)
			[right of=q2]
			{$\loc_1,\substack{p\geq x \\ p\geq q \\ y\geq p+x}$};
		\node[symbstate] (q5)
			[right of=q3]
			{$\loc_1,
				\substack{3p+x\geq y \\ p\geq x \\ y\geq q+x \\ p\geq q \\ }$};
		\node[symbstate] (q6)
			[right of=q4]
			{$\loc_1,\substack{p\geq x \\ p\geq q \\ y\geq p+q+x}$};
		\node[rectangle] (q7)
			[right of=q5]	{$\cdots$};
		\node[rectangle] (q8)
			[right of=q6]	{$\cdots$};
		\path[->]
			(q0) edge (q1) (q0) edge (q2)
			(q1) edge (q3)
			(q2) edge (q4)
			(q3) edge (q5)
			(q4) edge (q6)
			(q5) edge (q7)
			(q6) edge (q8)
		;
	\end{tikzpicture}
	\caption{\label{fig:merge:ex2:PZG}PZG}
\end{subfigure}

\medskip

\begin{subfigure}[b]{0.6\textwidth}
	\centering
	\begin{tikzpicture}[PZG, node distance=2cm]
		\node[symbstate,initial] (q0)
			{$\loc_0,$---};
		\node[symbstate] (q1)
			[above right of=q0,yshift=-0.75cm]
			{$\loc_1,\substack{p+x\geq y \\ p\geq x}$};
		\node[symbstate] (q2)
			[below right of=q0,yshift=0.75cm]
			{$\loc_1,\substack{p\geq x \\ y\geq p}$};
		\node[symbstate] (q3)
			[right of=q1]
			{$\loc_1,
				\substack{2p+x\geq y \\ p\geq x \\ y\geq x \\ p\geq q \\ }$};
		\node[symbstate] (q5)
			[right of=q3]
			{$\loc_1,
				\substack{3p+x\geq y \\ p\geq x \\ y\geq q+x \\ p\geq q \\ }$};
		\node[rectangle] (q7)
			[right of=q5]	{$\cdots$};
		\path[->]
			(q0) edge (q1) (q0) edge (q2)
			(q1) edge (q3)
			(q2) edge[loop right] (q2)
			(q3) edge (q5)
			(q5) edge (q7)
		;
	\end{tikzpicture}
	\caption{\label{fig:merge:ex2:PZGincl}with inclusion}
\end{subfigure}
\begin{subfigure}[b]{0.38\textwidth}
	\centering
	\begin{tikzpicture}[PZG, node distance=2cm]
		\node[symbstate,initial] (q0)
			{$\loc_0,$---};
		\node[symbstate] (q1)
			[right of=q0]
			{$\loc_1,\substack{p\geq x}$};
		\draw[->]	($(q0.east)+(0,0.1)$) -- ($(q1.west)+(0,0.1)$);
		\draw[->]	($(q0.east)+(0,-0.1)$) -- ($(q1.west)+(0,-0.1)$);
		\draw (q1) edge[loop right] (q1);
	\end{tikzpicture}
	\caption{\label{fig:merge:ex2:PZGmerge}with merge}
\end{subfigure}
\caption{\label{fig:merge:ex2}Example with infinite PZG that becomes finite by
 merging}
\end{figure}
\end{example}

\begin{algorithm}[tb]
 	\tcc{Building $\PZG = ( \Edges, \visited, \symbstateinit, {\SymbTransitions} )$}

		$\visited \assign \{\symbstateinit\}$
		\LongVersion{%
		
		}\ShortVersion{;}
		$\queue \assign \{\symbstateinit\}$
		\LongVersion{%
		
		}\ShortVersion{;}
		${\SymbTransitions} \assign \emptyset$
		
		\While{$\queue \neq \emptyset$}{
			$\queuenew \assign \emptyset$
			
			\ForEach{$\symbstate \in \queue$}{
				
				\ForEach{$(\edge, \symbstate') \in \SuccE(\symbstate)$}{
					$\queuenew \assign \queuenew \cup (\{\symbstate'\} \setminus \visited)$\nllabel{alg:BFS-by-layer:updateS}

					${\SymbTransitions} \assign {\SymbTransitions} \cup \{ (\symbstate, \edge, \symbstate') \}$\nllabel{alg:BFS-by-layer:updateT}
				}
			}
		
			$\visited \assign \visited \cup \queuenew$
			
			$\visited, \queue \assign \mergeSets(\PZG, \visited, \queuenew)$\nllabel{alg:BFS-by-layer:merge}
		}
	
	\caption{BFS by layer $\bfsbylayer(\A)$}
	\label{alg:BFS-by-layer}
\end{algorithm}

\cref{alg:BFS-by-layer} constructs the state space for a given PTA~$\A$ by
breadth-first search (BFS) from the initial state~$\symbstateinit$.
It computes the set of reachable states \visited{} by repeatedly adding the next layer of successor states~$\queuenew$ (\cref{alg:BFS-by-layer:updateS}), maintaining the transitions (\cref{alg:BFS-by-layer:updateT}).
Note that each iteration (\cref{alg:BFS-by-layer:merge}) calls a merging function $\mergeSets$ (given in \cref{alg:merge_function}), which may reduce both \visited{} and \queue{}.
This call to the merging function is the crux of our approach.

\cref{alg:BFS-by-layer} can be extended, depending on the analysis or parameter synthesis problem.
For instance, an invariant property can be checked for each reachable
symbolic
state and terminate as soon as the property is violated.
For reachability synthesis, one may accumulate all solutions as a set of constraints that lead to a state satisfying a property (this algorithm, $\EFsynth$, was formalized in \eg{} \cite{JLR15}).
\LongVersion{%
	As an optimization, one can prune the state space on reaching states that lie already within the accumulated constraints,  since these cannot lead to new solutions.
	We do not add such extensions to the algorithm, since this paper focuses on the merging of states during state space generation.
}

Then, \cref{alg:merge_function} ``simply'' calls recursively the $\mergeonestate$ function (given in \cref{algo:mergeOneState}) on each state of $\queue$, using additional arguments $\PZG$ and $\visited$ and/or $\queue$.
The heuristics to select arguments for calls to $\mergeonestate$ will be discussed later.
Note that $\mergeonestate$ \emph{modifies} its arguments, notably~$\PZG$
(in the implementation, we use a call by reference).

\begin{algorithm}[tb]

	\Fn{$\mergeSets(\PZG, V, Q)$}{
	\ForEach{$\symbstate\in Q$}{
	\setlength{\fboxrule}{1pt}
	\fcolorbox{blue}{white}{
		\begin{varwidth}{5cm}
			\HiLi[yellow!40]{$\mergeonestate(\symbstate, \PZG, Q)$}
			
			\HiLi[red!20]{$\mergeonestate(\symbstate, \PZG, V)$}
		\end{varwidth}
	}

	}

	\Return $(V, Q)$
	
	}
	\caption{Heuristics to merge states within $Q$ and/or $V$ \\
		\HiLi[red!20]{Visited\strut}
		\HiLi[yellow!40]{Queue\strut}
		\setlength{\fboxrule}{1pt}
		\fcolorbox{blue}{white}{Ordered}
	}
	\label{alg:merge_function}
\end{algorithm}

\begin{algorithm}[tb]
	\Fn{$\mergeonestate(\symbstate, \PZG, \looksiblings)$}{
		
		\HiLi[blue!20]{$\ismerged \assign \BFalse$ \strut}

		$\candidates \assign \getsiblings(\symbstate, \looksiblings)$\nllabel{algo:mergeOneState:siblings}
		
		\ForEach{$\symbstatey \in \candidates$}{
			\tcc{Mergeability test}
			\If{$\ismergeable(\symbstate , \symbstatey)$\nllabel{algo:mergeOneState:mergeable}}{
				$\symbstate.\fieldConstr \assign \symbstate.\fieldConstr \cup \symbstatey.\fieldConstr$\nllabel{algo:mergeOneState:merge} %
				
				\HiLi[blue!20]{$\ismerged \assign \BTrue$ \strut}
				
				\tcc{Update transition targets and source}
				
				\ForEach{$\symbtrans \in {\SymbTransitions}$\nllabel{algo:mergeOneState:deleteT:begin}}{
					\lIf{$\symbtrans.\fieldTarget = \symbstatey$}{
						$\symbtrans.\fieldTarget \assign \symbstate$
					}
					\lIf{$\symbtrans.\fieldSource = \symbstatey$}{
						$\symbtrans.\fieldSource \assign \symbstate$\nllabel{algo:mergeOneState:deleteT:end}
					}
				}

				\tcc{Delete $\symbstatey$}
				$\SymbStates \assign \SymbStates \setminus \{ \symbstatey \}$\nllabel{algo:mergeOneState:deleteS}
				
				\tcc{Handle initial state}
				\lIf{$\symbstateinit = \symbstatey$}{
					$\symbstateinit \assign \symbstate$\nllabel{algo:mergeOneState:init}
				}
			}
		}
		
		\HiLi[blue!20]{
			\lIf{$\ismerged$}{
				$\mergeonestate(\symbstate, \PZG, \looksiblings)$
		}	
		\strut}

	}
	
	\caption{Merging a state with an update of the statespace $\PZG = ( \Edges, \SymbStates, \symbstateinit, {\SymbTransitions} )$ on-the-fly. The variant with restart after a merge is indicated as in
		\HiLi[blue!20]{Restart\strut}}
	\label{algo:mergeOneState}
\end{algorithm}

\cref{algo:mergeOneState} attempts at merging a state~$\symbstate$ while looking for candidate states in \looksiblings{}.
We first look for the \emph{siblings} of~$\symbstate$ (states with same location) within \looksiblings{} (\cref{algo:mergeOneState:siblings}).
We use a function $\getsiblings((\loc, \Constr), \SymbStates)$ that returns the siblings, as in
$\{ (\loc', \Constr') \in \SymbStates \mid \loc = \loc' \}$.
If the union of the constraints is convex (\cref{algo:mergeOneState:mergeable}), state~$\symbstate$ becomes the result of  merging~$\symbstate$ with the candidate~$\symbstatey$ (\cref{algo:mergeOneState:merge}).
The candidate~$\symbstatey$ is deleted (\cref{algo:mergeOneState:deleteS}),
as well as all transitions leading to or coming from it (\crefrange{algo:mergeOneState:deleteT:begin}{algo:mergeOneState:deleteT:end}).
We finally modify the initial state~$\symbstateinit$ of~$\PZG$, in case it was merged (\cref{algo:mergeOneState:init}).

\subsection{Heuristics for Merging}
\label{subsec:options}

We now introduce and discuss several heuristics for merging states, leading to various options in the merging algorithm.
There is no provably best option that is guaranteed to be superior over all other possible options.
We will perform extensive experiments in \cref{section:experiments} to find out what works well on a number of benchmarks.
The two main driving forces to select between these options are:
\begin{oneenumerate}%
	\item a maximal reduction of the state space; and
	\item a minimisation of the computation time.
\end{oneenumerate}%
Although usually smaller state spaces tend to require less computation time, this
is not always the case: Sometimes one might need extra effort to check if states can
be merged, in order to perform even more reduction.
Subsequent computations might
profit from the smaller state space, but if one is checking properties on-the-fly, the
extra effort might not be justifiable. The discussion on the options will be guided
by some questions. 
A question that we have not investigated is if it is advantageous to merge triples of states.

The subsequent \cref{ex:merge:options} will show that different choices can indeed lead to state spaces of different size.
Note that, even when we fix the answers to the questions, the result is still non-deterministic, since the result of merging depends on the \emph{order} in which we would consider the siblings.

\paragraph{Question 1: What to merge with what?} Assume that we are computing
the next level of reachable states in a BFS process (\cref{alg:BFS-by-layer}).
Assuming that the states in \visited have been properly merged, we clearly still need
to merge the new states in $\symbstate\in\queue$. What to merge them with? Do we only
compare $\symbstate$ with other states in the \queue? Or also with
\visited? If we merge with \visited states, the final state space could become
smaller. On the other hand, since time was spent to compute those states already,
is it worth looking at them?
The different strategies considered merge a new state with its siblings:

\begin{itemize}
	\item only in the queue (\namequeue{}), or
	\item in all visited states (\namevisited{}) (including the queue), or
	\item first in the queue and, after that, in the visited list (\nameordered{}).
\end{itemize}

\noindent These different possibilities are pictured by different colours in
\cref{alg:merge_function}.

\paragraph{Question 2: Restart after a merge?} The next question is what to do if we find that $\symbstate$ could be merged
with some $\symbstate'$ into the (larger) $\symbstate_m$? We have already searched through some set $Q'$ of states before we found $\symbstate'$. Those states in $Q'$ could not be merged with $\symbstate$. However, it could be possible that a state in $\symbstate''\in Q'$ can be merged with $\symbstate_m$.
So should we {\em restart} the search (and lose some time to find more reduction),
or should we just resume the search, and only find merge candidates for $\symbstate_m$
in the remaining states that we have not yet considered? So, if a state can be merged with one of its siblings, 
should we restart or not restart the search through all candidate siblings? %

\paragraph{Question 3: When to update the statespace?} 
Assume that we find a successful merge of a state $\symbstate\in\queue$
with some other state $\symbstate'\in\visited$, leading to a larger state~$\symbstate_m$.
How do we now modify the 
already computed part of the state space?
We replace $\symbstate'$ by~$\symbstate_m$, redirecting all transitions going to~$\symbstate'$
to $\symbstate_m$. This could make the successors of~$\symbstate'$ unreachable, so we could also redirect
transitions from~$\symbstate'$ to transitions from~$\symbstate_m$. Alternatively, we could just remove the
successors of~$\symbstate'$. This is valid, since we will still compute all successors of~$\symbstate_m$
in the next level. Similar considerations apply to all states reachable from successors of~$\symbstate'$.

These approaches can have unforeseen effects: First of all, if we remove successor states, they cannot act anymore as merge candidates, thus potentially blocking future merges.
Second, removing transitions may change the ``shortest path'' to reachable states, leading to wrong answers for depth-bounded and shortest-path searches. Third, not removing states leads to a larger state space than necessary. Finally, doing a full reachability analysis is linear in the size of the state space generated so far (but does not involve any polyhedra computations). 

So one question is how often we should update the computed part of the state space?
The options we considered are to do ``garbage collection'' after: %
\begin{itemize}
	\item each merge with each sibling, or
	\item having processed the whole candidate list of a state, or
	\item having processed all states in a complete level.
\end{itemize}

\paragraph{Question 4: How to update the statespace?} The ``garbage collection'' can be implemented in two ways: If we can merge a state, we update the statespace:
\begin{itemize}
	\item reconstruct: with a copy of the reachable part of the statespace, or
	\item on-the-fly: deleting the merged state and updating its transitions \emph{in situ}.
\end{itemize}
Deleting states on-the-fly is cheaper than running a separate algorithm to mark and copy the reachable
part of the state space.
However, note that when updating transitions on-the-fly, some unnecessary successor states may stay in the state space. 
These unnecessary states and transitions lead to a waste of memory. 
On the other hand, they might still be useful as merge candidates for future merges.

\begin{example}\label{ex:merge:options}%
Recall that \cref{fig:merge:options} presents a fictitious example summarising the effect of the
options discussed in this section.
The parametric zone graph (with five states) is shown in \cref{fig:merge:options:pzg}, the corresponding projections of the zones on the parameters in \cref{fig:merge:options:zones} and the legend for the different colours in \cref{fig:merge:options:legend}.
All states have the same location, hence may be candidates for merging.
Two states ($\symbstate_0$ and $\symbstate_1$) are in the visited set, while two are in the queue ($\symbstate_2$
and $\symbstate_3$), and the last one, $\symbstate_4$, is currently being handled.

Let us first consider that the merge is only done with states in the queue. Then $\symbstate_4$
is merged with $\symbstate_2$ and no merge with $\symbstate_3$ can occur. This leads to the zones depicted
in \cref{fig:merge:options:m1}.

Let us now consider that the merge is done with all visited states. 
Then the following execution becomes possible:
State $\symbstate_4$ could be first merged with $\symbstate_1$, leading to the zones in \cref{fig:merge:options:m21}.
Now, if the restart option is used, this newly computed zone could be merged with
$\symbstate_0$, leading to the zones in \cref{fig:merge:options:m22}.
Note that we cannot merge the result with $\symbstate_2$ anymore.

Finally, let us consider the case where we merge with the queue first and then with
visited. State $\symbstate_4$ is then merged with $\symbstate_2$, as in \cref{fig:merge:options:m1}.
Then no merge with $\symbstate_3$ nor with $\symbstate_0$ can be performed, but a merge with $\symbstate_1$ is
possible, leading to \cref{fig:merge:options:m31}. If furthermore the state space
is updated immediately after a merge, the new state (instead of $\symbstate_4$) is merged with
$\symbstate_1$, leading to \cref{fig:merge:options:m32}.

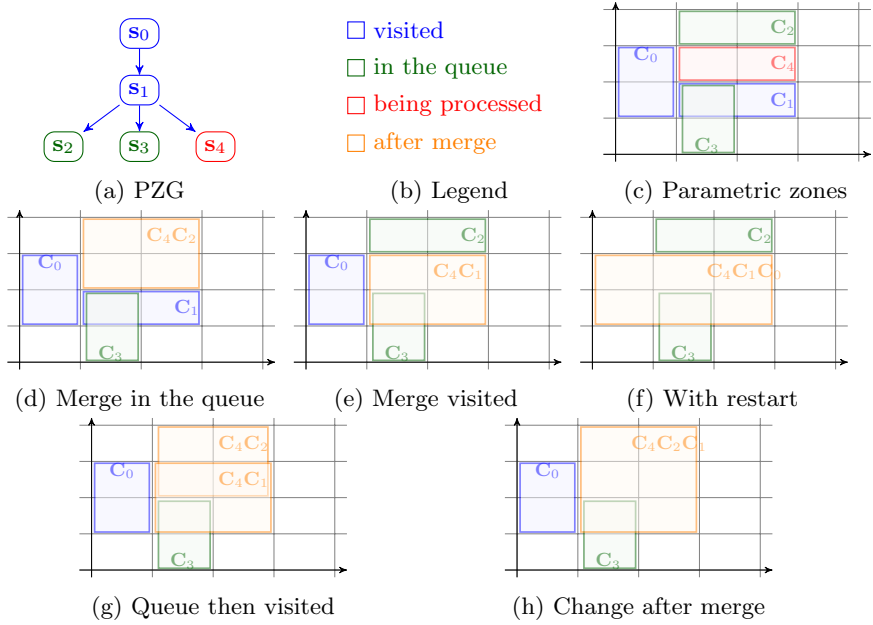
\begin{figure}[tb]
\centering
\begin{subfigure}[b]{0.35\textwidth}
	\centering
	\begin{tikzpicture}[PZG, yscale=.5]
	\begin{scope}[blue]
		\node[symbstate]
			(s0){$\symbstate_0$};
		\node at (0,-1.5) [symbstate]
			(s1){$\symbstate_1$};
		\draw (s0) -- (s1);
	\end{scope}
	\begin{scope}[DarkGreen]
		\node at (-1,-3) [symbstate]
			(s2){$\symbstate_2$};
		\node at (0,-3) [symbstate]
			(s3){$\symbstate_3$};
	\end{scope}
	\begin{scope}[red]
		\node at (1,-3) [symbstate]
			(s4){$\symbstate_4$};
	\end{scope}
	\begin{scope}[blue]
		\draw (s1) -- (s2);
		\draw (s1) -- (s3);
		\draw (s1) -- (s4);
	\end{scope}
	\end{tikzpicture}
	\caption{PZG\label{fig:merge:options:pzg}}
\end{subfigure}
\begin{subfigure}[b]{.3\textwidth}
	\centering
	\begin{tikzpicture}[PZG, node distance=2cm]
	\begin{scope}[blue]
		\node[rectangle,draw,label=right:visited](lblue){};
	\end{scope}
	\begin{scope}[DarkGreen]
		\node at (0, -.5) [rectangle,draw,label=right:{in the queue}](lgreen){};
	\end{scope}
	\begin{scope}[red]
		\node at (0, -1) [rectangle,draw,label=right:being processed](lred){};
	\end{scope}
	\begin{scope}[orange]
		\node at (0, -1.5) [rectangle,draw,label=right:after merge](lorange){};
	\end{scope}
	\end{tikzpicture}
	\caption{Legend\label{fig:merge:options:legend}}
\end{subfigure}
\begin{subfigure}[b]{.3\textwidth}
\centering
	\scalebox{0.8}{
	\begin{tikzpicture}[mergingFigure]
  \draw[gray,very thin] (-0.2,-0.2) grid (4.2,4.2);
  \draw[->] (-0.2,0) -- (4.2,0);
  \draw[->] (0,-0.2) -- (0,4.2);
	\begin{scope}[blue,fill=blue!5,opacity=0.5]
		\filldraw[thick] (0.05,1.05) rectangle (0.95,2.95);
		\node at (0.5,2.75) (z0) {$\Constr_0$};
		\filldraw[thick] (1.05,1.05) rectangle (2.95,1.95);
		\node at (2.75,1.5) (z1) {$\Constr_1$};
	\end{scope}
	\begin{scope}[DarkGreen,fill=DarkGreen!5,opacity=0.5]
		\filldraw[thick] (1.05,3.05) rectangle (2.95,3.95);
		\node at (2.75,3.5) (z2) {$\Constr_2$};
		\filldraw[thick] (1.1,0.05) rectangle (1.95,1.9);
		\node at (1.5,0.25) (z3) {$\Constr_3$};
	\end{scope}
	\begin{scope}[red,fill=red!5,opacity=0.5]
		\filldraw[thick] (1.05,2.05) rectangle (2.95,2.95);
		\node at (2.75,2.5) (z4) {$\Constr_4$};
	\end{scope}
	\end{tikzpicture}
	}
	\caption{Parametric zones\label{fig:merge:options:zones}}
\end{subfigure}

\begin{subfigure}[b]{0.3\textwidth}
\centering
	\scalebox{0.8}{
	\begin{tikzpicture}[mergingFigure]
  \draw[gray,very thin] (-0.2,-0.2) grid (4.2,4.2);
  \draw[->] (-0.2,0) -- (4.2,0);
  \draw[->] (0,-0.2) -- (0,4.2);
	\begin{scope}[blue,fill=blue!5,opacity=0.5]
		\filldraw[thick] (0.05,1.05) rectangle (0.95,2.95);
		\node at (0.5,2.75) (z0) {$\Constr_0$};
		\filldraw[thick] (1.05,1.05) rectangle (2.95,1.95);
		\node at (2.75,1.5) (z1) {$\Constr_1$};
	\end{scope}
	\begin{scope}[DarkGreen,fill=DarkGreen!5,opacity=0.5]
		\filldraw[thick] (1.1,0.05) rectangle (1.95,1.9);
		\node at (1.5,0.25) (z3) {$\Constr_3$};
	\end{scope}
	\begin{scope}[orange,fill=orange!5,opacity=0.5]
		\filldraw[thick] (1.05,2.05) rectangle (2.95,3.95);
		\node at (2.5,3.5) (z4z2) {$\Constr_4\Constr_2$};
	\end{scope}
	\end{tikzpicture}
	}
	\caption{Merge in the queue\label{fig:merge:options:m1}}
\end{subfigure}
\begin{subfigure}[b]{0.3\textwidth}
\centering
	\scalebox{0.8}{
	\begin{tikzpicture}[mergingFigure]
  \draw[gray,very thin] (-0.2,-0.2) grid (4.2,4.2);
  \draw[->] (-0.2,0) -- (4.2,0);
  \draw[->] (0,-0.2) -- (0,4.2);
	\begin{scope}[blue,fill=blue!5,opacity=0.5]
		\filldraw[thick] (0.05,1.05) rectangle (0.95,2.95);
		\node at (0.5,2.75) (z0) {$\Constr_0$};
	\end{scope}
	\begin{scope}[DarkGreen,fill=DarkGreen!5,opacity=0.5]
		\filldraw[thick] (1.05,3.05) rectangle (2.95,3.95);
		\node at (2.75,3.5) (z2) {$\Constr_2$};
		\filldraw[thick] (1.1,0.05) rectangle (1.95,1.9);
		\node at (1.5,0.25) (z3) {$\Constr_3$};
	\end{scope}
	\begin{scope}[orange,fill=orange!5,opacity=0.5]
		\filldraw[thick] (1.05,1.05) rectangle (2.95,2.95);
		\node at (2.5,2.5) (z4z1) {$\Constr_4\Constr_1$};
	\end{scope}
	\end{tikzpicture}
	}
	\caption{Merge visited\label{fig:merge:options:m21}}
\end{subfigure}
\begin{subfigure}[b]{0.3\textwidth}
\centering
	\scalebox{0.8}{
	\begin{tikzpicture}[mergingFigure]
  \draw[gray,very thin] (-0.2,-0.2) grid (4.2,4.2);
  \draw[->] (-0.2,0) -- (4.2,0);
  \draw[->] (0,-0.2) -- (0,4.2);
	\begin{scope}[DarkGreen,fill=DarkGreen!5,opacity=0.5]
		\filldraw[thick] (1.05,3.05) rectangle (2.95,3.95);
		\node at (2.75,3.5) (z2) {$\Constr_2$};
		\filldraw[thick] (1.1,0.05) rectangle (1.95,1.9);
		\node at (1.5,0.25) (z3) {$\Constr_3$};
	\end{scope}
	\begin{scope}[orange,fill=orange!5,opacity=0.5]
		\filldraw[thick] (0.05,1.05) rectangle (2.95,2.95);
		\node at (2.5,2.5) (z4z1z0) {$\Constr_4\Constr_1\Constr_0$};
	\end{scope}
	\end{tikzpicture}
	}
	\caption{With restart\label{fig:merge:options:m22}}
\end{subfigure}

\begin{subfigure}[b]{0.45\textwidth}
\centering
	\scalebox{0.8}{
	\begin{tikzpicture}[mergingFigure]
  \draw[gray,very thin] (-0.2,-0.2) grid (4.2,4.2);
  \draw[->] (-0.2,0) -- (4.2,0);
  \draw[->] (0,-0.2) -- (0,4.2);
	\begin{scope}[blue,fill=blue!5,opacity=0.5]
		\filldraw[thick] (0.05,1.05) rectangle (0.95,2.95);
		\node at (0.5,2.75) (z0) {$\Constr_0$};
	\end{scope}
	\begin{scope}[DarkGreen,fill=DarkGreen!5,opacity=0.5]
		\filldraw[thick] (1.1,0.05) rectangle (1.95,1.9);
		\node at (1.5,0.25) (z3) {$\Constr_3$};
	\end{scope}
	\begin{scope}[orange,fill=orange!5,opacity=0.5]
		\filldraw[thick] (1.1,2.05) rectangle (2.9,3.95);
		\node at (2.5,3.5) (z4z2) {$\Constr_4\Constr_2$};
		\filldraw[thick] (1.05,1.05) rectangle (2.95,2.95);
		\node at (2.5,2.5) (z4z1) {$\Constr_4\Constr_1$};
	\end{scope}
	\end{tikzpicture}
	}
	\caption{Queue then visited\label{fig:merge:options:m31}}
\end{subfigure}
\begin{subfigure}[b]{0.45\textwidth}
\centering
	\scalebox{0.8}{
	\begin{tikzpicture}[mergingFigure]
  \draw[gray,very thin] (-0.2,-0.2) grid (4.2,4.2);
  \draw[->] (-0.2,0) -- (4.2,0);
  \draw[->] (0,-0.2) -- (0,4.2);
	\begin{scope}[blue,fill=blue!5,opacity=0.5]
		\filldraw[thick] (0.05,1.05) rectangle (0.95,2.95);
		\node at (0.5,2.75) (z0) {$\Constr_0$};
	\end{scope}
	\begin{scope}[DarkGreen,fill=DarkGreen!5,opacity=0.5]
		\filldraw[thick] (1.1,0.05) rectangle (1.95,1.9);
		\node at (1.5,0.25) (z3) {$\Constr_3$};
	\end{scope}
	\begin{scope}[orange,fill=orange!5,opacity=0.5]
		\filldraw[thick] (1.05,1.05) rectangle (2.95,3.95);
		\node at (2.5,3.5) (z4z2z1) {$\Constr_4\Constr_2\Constr_1$};
	\end{scope}
	\end{tikzpicture}
	}
	\caption{Change after merge\label{fig:merge:options:m32}}
\end{subfigure}

\caption{Illustration of the merging options\label{fig:merge:options}}
\end{figure}
\end{example}

\subsection{Preservation of Properties}

\begin{prop}
Given a PTA $\A$, let $\PZG$ and $\PZG'$ be the parametric zone graph before and after merging.
Then $\PZG'$ simulates $\PZG$.
\end{prop}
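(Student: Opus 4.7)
The plan is to exhibit an explicit simulation relation $R$ between the symbolic states of $\PZG$ and those of $\PZG'$ and to check the two standard clauses (initial state, matching of transitions). Since the merging operation from \cref{algo:mergeOneState} replaces two mergeable states $(\loc, \Constr_1)$ and $(\loc, \Constr_2)$ by $(\loc, \Constr_1 \cup \Constr_2)$, the natural candidate is
\[
	\symbstate \mathrel{R} \symbstate' \iff \symbstate.\fieldLoc = \symbstate'.\fieldLoc \text{ and } \symbstate.\fieldConstr \subseteq \symbstate'.\fieldConstr .
\]
This relation is clearly reflexive, so every state of $\PZG$ that survives the merging is related to itself; and every state of $\PZG$ that is absorbed by a merge is related to the resulting merged state, since $\Constr_i \subseteq \Constr_1 \cup \Constr_2$ for $i=1,2$. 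I would prove the proposition by induction on the number of merge steps performed by \cref{algo:mergeOneState}, so it suffices to treat a single merge step and then compose.

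The core technical step is a monotonicity lemma for the symbolic successor operator: if $\symbstate_1 = (\loc, \Constr_1)$, $\symbstate_1' = (\loc, \Constr_1')$ with $\Constr_1 \subseteq \Constr_1'$, and $\symbstate_1 \SymbTransitions^{\edge} \symbstate_2 = (\loc', \Constr_2)$ via edge $\edge = (\loc, \guard, \action, \resets, \loc')$, then $\symbstate_1' \SymbTransitions^{\edge} \symbstate_2' = (\loc', \Constr_2')$ for some $\Constr_2' \supseteq \Constr_2$ (assuming satisfiability, which follows from $\Constr_2 \subseteq \Constr_2'$ and $\Constr_2$ satisfiable). This reduces to the facts that intersection with a constraint, reset, time elapsing, and intersection with an invariant are all monotonic with respect to inclusion of constraints, as is immediate from the definitions recalled in \cref{ss:symbolic}.

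With this lemma in hand, verifying the simulation clauses is routine. For the initial state, $\symbstateinit$ of $\PZG$ and $\symbstateinit$ of $\PZG'$ coincide unless the initial state was itself merged (handled by \cref{algo:mergeOneState:init}), in which case $\symbstateinit^{\PZG}.\fieldConstr \subseteq \symbstateinit^{\PZG'}.\fieldConstr$ and they share the same location. For the transition clause, suppose $\symbstate \mathrel{R} \symbstate'$ and $\symbstate \SymbTransitions^{\edge} \symbstate_2$ in $\PZG$. By the monotonicity lemma there exists $\symbstate_2'$ with $\symbstate' \SymbTransitions^{\edge} \symbstate_2'$ in the unmerged successors of $\symbstate'$ and $\symbstate_2 \mathrel{R} \symbstate_2'$. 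The only subtlety is that the merging algorithm \emph{removes} transitions leaving or arriving at the discarded state $\symbstatey$ on \crefrange{algo:mergeOneState:deleteT:begin}{algo:mergeOneState:deleteT:end} and redirects them to $\symbstate$; one must argue that this redirection still yields a transition in $\PZG'$, which again follows from monotonicity applied to the now larger constraint.

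The main obstacle I anticipate is the bookkeeping around these on-the-fly modifications: after deletion of $\symbstatey$, successors of $\symbstatey$ that were only reachable through $\symbstatey$ may now be unreachable, and one has to check that every transition the simulation needs to match is still present (either unchanged or redirected to the merged state) rather than silently dropped. This is essentially a reachability argument: the successors of the merged state $\symbstate_m$ cover (in the $R$ sense) the successors of both $\symbstate$ and $\symbstatey$, so for every required match one can exhibit a concrete transition in $\PZG'$, closing the simulation proof.
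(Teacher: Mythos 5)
Your proposal is correct and follows essentially the same route as the paper's proof sketch: the identical simulation relation (equal location plus inclusion of constraints), combined with the observation that merging only ever replaces a constraint by a convex union containing it. The only difference is one of detail: the paper simply cites the monotonicity of symbolic successors as well-known, whereas you spell it out as an explicit lemma and additionally track the algorithm's transition-redirection bookkeeping, which is a reasonable elaboration rather than a different argument.
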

\begin{proof}[sketch]
Consider the relation $\symbstate \sqsubseteq \symbstate'$ if and only if
$\symbstate.\fieldLoc = \symbstate'.\fieldLoc$ and 
$\symbstate.\fieldConstr \subseteq \symbstate'.\fieldConstr$.
It is well-known \cite{NPP18} that this forms a simulation relation, \ie{}
if $\symbstate \sqsubseteq \symbstate'$ and $\symbstate \SymbTransitions  \symbstate_0$, %
then for some $\symbstate'_0$, we have $\symbstate_0 \sqsubseteq \symbstate'_0$
and $\symbstate' \SymbTransitions  \symbstate'_0$.

Note that while merging, we repeatedly replace a state $(\loc,\Constr)$ by a state $(\loc,\Constr\cup \Constr')$,
in which case $(\loc,\Constr) \sqsubseteq (\loc,\Constr\cup \Constr')$. So indeed, merged states
can simulate the behaviour of all original states that were merged.
\end{proof}

\begin{corollary}
Given a PTA $\A$, let $\PZG$ and $\PZG'$ be the parametric zone graph before and after merging.
Let $\varphi$ be a property in $\forall$CTL* with atomic propositions defined in terms
of state locations only. Then $\PZG'\vDash \varphi$ implies $\PZG\vDash \varphi$.
\end{corollary}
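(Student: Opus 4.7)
The plan is to derive the corollary as a textbook consequence of the simulation established by the preceding Proposition, namely that $\PZG'$ simulates $\PZG$ via the relation $\sqsubseteq$ (same location, included constraint). The key fact I would invoke is the classical preservation result for $\forall$CTL*: whenever a transition system $T_2$ simulates $T_1$ and the atomic propositions are stable under the simulation relation, every $\forall$CTL* formula satisfied by $T_2$ is also satisfied by $T_1$. The direction of implication (from the simulating system to the simulated one) is exactly what we want, and it reflects the fact that $\forall$CTL* contains only universal path quantifiers, so every behaviour of $T_1$ must be matched by $T_2$, not the other way around.

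First, I would briefly recall (or re-prove by structural induction on $\varphi$) this preservation lemma. The only nontrivial step is the path quantifier: a formula $\forall \pi.\psi$ holds in $T_1$ at $\symbstate$ if every maximal path of $T_1$ from $\symbstate$ satisfies the path-formula $\psi$. Given a simulating state $\symbstate'$ in $T_2$ with $\symbstate \sqsubseteq \symbstate'$, the simulation lets us transfer every $T_1$-path to a $T_2$-path with a pointwise $\sqsubseteq$-related trace; the inductive hypothesis on $\psi$ then shows that the $T_1$-path satisfies $\psi$ whenever the matching $T_2$-path does. Since $\forall$CTL* lacks existential path quantifiers, no dual transfer is needed.

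Second, I would verify the stability hypothesis on atomic propositions. By assumption, atomic propositions depend only on $\symbstate.\fieldLoc$, and by the definition of $\sqsubseteq$ in the proof of the preceding Proposition, $\symbstate \sqsubseteq \symbstate'$ forces $\symbstate.\fieldLoc = \symbstate'.\fieldLoc$. So related states agree on every atomic proposition, and the preservation lemma applies verbatim.

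Finally, I would observe that the initial states of $\PZG$ and $\PZG'$ are related: either $\symbstateinit$ is untouched by merging, or it has been absorbed into a strictly larger symbolic state at the same location (see \cref{algo:mergeOneState:init} of \cref{algo:mergeOneState}), and in both cases $\symbstateinit \sqsubseteq \symbstateinit'$. Combined with the preservation lemma, this yields $\PZG' \vDash \varphi \Rightarrow \PZG \vDash \varphi$. I anticipate no real obstacle: the proof is a straightforward instantiation of a standard result, and the only point demanding care is the formulation of the atomic-proposition hypothesis so that the classical theorem applies directly.
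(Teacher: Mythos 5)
Your proposal is correct and follows essentially the same route as the paper: both invoke the classical result that $\forall$CTL* properties are preserved by simulation (the paper cites \cite[Thm 7.76]{BKPoMC08}) and both observe that the relation $\sqsubseteq$ forces equal locations, so related states agree on all atomic propositions. The extra details you supply (the inductive sketch of the preservation lemma and the check on initial states) are consistent with, and merely expand on, the paper's two-line argument.
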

\begin{proof}[sketch]
All universal properties (in $\forall$CTL*) are preserved by simulation \cite[Thm 7.76]{BKPoMC08}.
In this case, the simulation $\sqsubseteq$ also implies that related states
have the same locations, so they satisfy the same atomic properties.
\end{proof}

The next proposition shows that we do not add arbitrary new behaviour.
Although merging can add behaviour, it cannot add unreachable locations and, more precisely, the set of locations reachable for each parameter valuation remains unchanged.
This guarantees that merging preserves reachability synthesis.
(A version of this result was shown in a different context in \cite[Thm.~1]{AFS13atva}%
.)

\begin{restatable}[preservation of reachability properties]{prop}{propositionSynthesis}\label{proposition:synthesis}
	Given a PTA~$\A$, let $\PZG'$ be the parametric zone graph after merging.
	Let~$\loc$ be a location, let~$\pval$ be a parameter valuation.
	
	$\loc$ is reachable in $\valuate{\A}{\pval}$ iff $\exists (\loc, \Constr') \in \PZG'$ such that $\pval \in \projectP{\Constr'}$.
\end{restatable}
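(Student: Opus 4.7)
\medskip

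\noindent\textbf{Proof plan.}
The plan is to prove the two directions separately, using two ingredients: (i) the standard soundness/completeness of the unmerged parametric zone graph $\PZG$ (the ``well-known result'' of~\cite{HRSV02} recalled above the definition of the symbolic semantics), and (ii) the simulation $\sqsubseteq$ of $\PZG$ by $\PZG'$ established in the preceding proposition.

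For the direction $\Rightarrow$ (completeness of merging), I would start from a reachable $\loc$ in $\valuate{\A}{\pval}$. By the standard completeness result applied to the unmerged PZG, there exists a reachable symbolic state $(\loc, \Constr) \in \PZG$ with $\pval \in \projectP{\Constr}$. The initial state of $\PZG'$ is obtained from $\symbstateinit$ by a (possibly empty) sequence of merges, hence $\symbstateinit \sqsubseteq \symbstateinit'$ for some reachable initial state of $\PZG'$. Iterating the simulation property of the preceding proposition along the finite run that reaches $(\loc, \Constr)$ in $\PZG$ yields a reachable $(\loc, \Constr') \in \PZG'$ with $\Constr \subseteq \Constr'$. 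Since $\projectP{\cdot}$ is monotone, $\pval \in \projectP{\Constr} \subseteq \projectP{\Constr'}$, which is exactly what we need.

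For the direction $\Leftarrow$ (soundness of merging), I would strengthen the statement to a soundness invariant proved by induction on the construction of $\PZG'$. The invariant reads: for every reachable $(\loc, \Constr') \in \PZG'$ and every pair $(\clockval, \pval)$ with $\clockval \models \valuate{\Constr'}{\pval}$, the concrete state $(\loc, \clockval)$ is reachable in $\valuate{\A}{\pval}$. The base case is the initial state, where the invariant follows from soundness of $\symbstateinit$ in $\PZG$. Two inductive steps are needed: a \emph{successor} step, where if $(\loc, \Constr')$ is sound and $(\loc', \Constr'') \in \SuccE((\loc, \Constr'))$ via edge~$\edge$, any $(\clockval', \pval) \models \Constr''$ is produced from some $(\clockval, \pval) \models \Constr'$ via a combined step $(\loc, \clockval) \longuefleche{(d, \edge)} (\loc', \clockval')$ (by the very definition of the symbolic successor), so soundness propagates; and a \emph{merging} step, where if two sound states $(\loc, \Constr_1), (\loc, \Constr_2)$ are merged into $(\loc, \Constr_1 \cup \Constr_2)$, any $(\clockval, \pval)$ satisfying the union satisfies one of $\Constr_1, \Constr_2$, and soundness of that summand gives reachability of $(\loc, \clockval)$. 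To conclude the direction, given $\pval \in \projectP{\Constr'}$ we pick a witness $\clockval$ with $\clockval \models \valuate{\Constr'}{\pval}$ and apply the invariant.

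The main subtlety I expect is the merging step above, since once merging is triggered on-the-fly, successors computed from merged states may be \emph{strictly larger} than the (possibly non-convex) union of successors of the original states; so constraints in $\PZG'$ cannot simply be described as unions of constraints of $\PZG$. The soundness invariant circumvents this: it does not require that merged constraints decompose into pieces of $\PZG$, only that each concrete instantiation of the merged constraint be reachable in the TA, which is guaranteed because successors of sound symbolic states are themselves sound. The simulation proposition handles the $\Rightarrow$ direction without further complications, and together the two arguments establish the claimed equivalence.
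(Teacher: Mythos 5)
Your proof is correct, and it follows the route the paper intends: the $\Rightarrow$ direction leans on the simulation $\sqsubseteq$ established in the preceding proposition together with the standard completeness of the unmerged $\PZG$, and the $\Leftarrow$ direction is a soundness invariant propagated through the two operations (successor computation and convex union) that build $\PZG'$ --- which is essentially the argument behind \cite[Thm.~1]{AFS13atva} that the paper points to. One remark on your stated ``main subtlety'': it is not actually an obstacle. The symbolic successor operation distributes over unions --- intersection with a (convex) guard or invariant, clock reset, and time elapsing are all defined pointwise/existentially, so $\Succ_\edge(\Constr_1 \cup \Constr_2) = \Succ_\edge(\Constr_1) \cup \Succ_\edge(\Constr_2)$ whenever $\Constr_1 \cup \Constr_2$ is convex. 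Consequently every constraint in $\PZG'$ \emph{is} exactly a finite union of constraints of states of $\PZG$ with the same location, and since $\projectP{\cdot}$ also distributes over unions this decomposition alone yields both directions at once; that is the alternative, more ``syntactic'' proof. Your soundness invariant sidesteps the distribution lemma entirely and is arguably more robust (it would survive an over-approximating successor as long as each concrete instantiation remains reachable is not needed --- here exactness is what you use), at the cost of re-proving the base and successor cases of the standard soundness of the symbolic semantics. Either way the argument goes through; just delete or soften the claim that successors of merged states ``may be strictly larger'' than the union of the components' successors, since that claim is false and could confuse a reader into thinking the equivalence is weaker than it is.
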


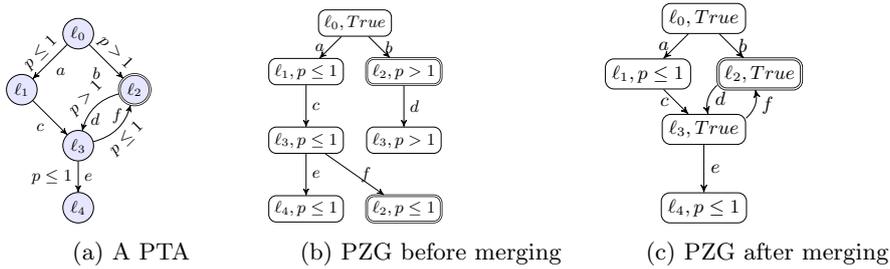
\begin{figure}[b]
\hfill
\begin{subfigure}[b]{0.27\textwidth}
\scalebox{.7}{\begin{tikzpicture}[pta, node distance=1.5cm]
\node[location] (l0) {$\loc_0$};
\node[location, below left of=l0] (l1) {$\loc_1$};
\node[location, accepting, below right of=l0] (l2) {$\loc_2$};
\node[location, below right of=l1] (l3) {$\loc_3$};
\node[location, below of=l3, yshift=1em] (l4) {$\loc_4$};
\path[]
  (l0) 
  	edge node [above,sloped] {$p\leq 1$} node [below right] {$a$} (l1)
  	edge node [above,sloped] {$p > 1$} node [below left] {$b$} (l2)
  (l1) edge node [below left] {$c$}  (l3)
  (l2) edge [bend right] node [above,sloped] {$p > 1$} node[below] {$d$} (l3)
  (l3) edge [bend right] node [below,sloped] {$p\leq 1$} node[above] {$f$} (l2)
  (l3) edge node [left] {$p\leq 1$} node [right] {$e$} (l4)
  ;
\end{tikzpicture}}
\caption{A PTA}
\label{fig:PTAex}
\end{subfigure}
\hfill
\begin{subfigure}[b]{0.35\textwidth}
\scalebox{0.7}{\begin{tikzpicture}[PZG, node distance=1.3cm]
\node[symbstate] (l0) {$\loc_0,True$};
\node[symbstate, below left of=l0] (l1) {$\loc_1,p\leq 1$};
\node[symbstate, accepting, below right of=l0] (l2a) {$\loc_2, p> 1$};
\node[symbstate, below of= l1] (l3a) {$\loc_3, p\leq 1$};
\node[symbstate, below of= l2a] (l3b) {$\loc_3, p> 1$};
\node[symbstate, below of= l3a] (l4) {$\loc_4,p\leq 1$};
\node[symbstate, accepting, below of= l3b] (l2b) {$\loc_2, p\leq 1$};

\path[]
  (l0) edge node [left] {$a$} (l1) edge node [right] {$b$} (l2a)
  (l1) edge node [right] {$c$} (l3a)
  (l2a) edge node [right] {$d$}  (l3b)
  (l3a) edge node [right] {$e$}  (l4) edge node [right] {$f$}  (l2b)
  ;
\end{tikzpicture}}
\caption{PZG before merging}
\label{fig:PZGex1}
\end{subfigure}
\hfill
\begin{subfigure}[b]{0.35\textwidth}
\scalebox{0.8}{\begin{tikzpicture}[PZG, node distance=1.3cm]
\node[symbstate] (l0) {$\loc_0,True$};
\node[symbstate, below left of= l0] (l1) {$\loc_1,p\leq 1$};
\node[symbstate, accepting, below right of= l0] (l2) {$\loc_2,True$};
\node[symbstate, below right of=l1] (l3) {$\loc_3, True$};
\node[symbstate, below of=l3] (l4) {$\loc_4,p\leq 1$};
\path[->]
  (l0) edge node [left] {$a$} (l1) edge node [right] {$b$} (l2)
  (l1) edge node [left] {$c$} (l3)
  (l2) edge [bend right] node [right] {$d$} (l3)
  (l3) edge [bend right] node [right] {$f$} (l2) edge node [right] {$e$} (l4)
  ;
\end{tikzpicture}}
\caption{PZG after merging}
\label{fig:PZGex2}
\end{subfigure}
\caption{The original PZG satisfies $\mathbf{G}\, (\loc_2 \to \mathbf{G}\,\neg\loc_4)$, which
is violated after merging.
Similarly, merging introduces a spurious infinite loop containing $\loc_2$.}
\label{fig:ex}
\end{figure}

Note that path properties in~$\PZG$ are not always preserved in~$\PZG'$, as the following example shows. 
Also, liveness properties in~$\PZG$ (such as ``every path visits location $\loc$ infinitely often'')
are not necessarily preserved in~$\PZG'$.

\begin{example}
\cref{fig:PZGex1} shows the parametric zone graph of the PTA in \cref{fig:PTAex}.
The maximal paths are $\loc_0, \loc_1, \loc_3, \loc_4$ and $\loc_0, \loc_1, \loc_3, \loc_2$ 
(for $p\leq 1$) and $\loc_0, \loc_2, \loc_3$ (for $p>1$).
All maximal paths satisfy the LTL property $\mathbf{G}\, (\loc_2 \to \mathbf{G}\,\neg\loc_4)$ (``no $\loc_4$ after an $\loc_2$'').
Also, there is no loop (infinite run) containing $\loc_2$.
However, the result after merging in \cref{fig:PZGex2} introduces the spurious path
$\loc_0, \loc_2, \loc_3, \loc_4$, violating the first property. It also introduces a spurious
loop $\loc_0,(\loc_2,\loc_3)^\omega$, around $\loc_2$.
\end{example}
 
This example uses parameters, but no clocks. \cite[Fig.~4]{LODLV13} shows an example with only clocks
(\ie{} a timed automaton) where a spurious loop is introduced by zone inclusion (subsumption), which is
just a special case of zone merging.

\section{Experiments}\label{section:experiments}

We evaluate here the effect of the merging heuristics on reachability synthesis, \ie{} the synthesis of the parameter valuations for which a given reachability property holds.
The synthesis algorithm explores the PZG to find all valid parameter valuations.

We implemented all our heuristics in the \imitator{} parametric timed model checker~\cite{Andre21}.
The parametric zones in the symbolic states are encoded using polyhedra.
All operations on polyhedra, and notably the mergeability test, are performed using the Parma Polyhedra Library~\cite{BHZ08}.
We also reimplemented and compared with the original merging technique of \imitator{} 2.12, 
which was an upgrade of the merging technique (in \imitator{} 2.6.1) of~\cite{AFS13atva}.
\subsection{Dataset and Experimental Environment}\label{subsection:dataset}

We use the full set of models with reachability properties from the \imitator{} benchmark library~\cite{AMP21}.
The library is made of a set of \emph{benchmarks}. Each benchmark may have different \emph{models} and each model comes with one or more \emph{properties}.
For example, \styleBen{Gear} comes with ten models, of different sizes, named \styleBen{Gear-1000} to \styleBen{Gear-10000}; each of them may have one or more properties. 

Our dataset comprises 124 pairs made of a model and a reachability property (\ie{}
124 possible executions of \imitator{}).
We set a timeout of 120\,s; only 102 executions terminate within this time bound for at least one of the merging heuristics.
For 42 of these executions, at least one of the heuristics performs at least one successful merge. %
Full statistics on our dataset are given in~\cref{tab:dataset-size}.
\begin{table}[b!]
	\caption{Size of our dataset}
	\scriptsize
	\centering
	
	\begin{tabular}{|l|r|r|r|}
		\hline
		          \rowHeader{}            & \# benchmarks & \# models &    \# properties     \\
	Whole reachability dataset          &          49          &        84        &             124             \\ \hline
	Where at least one execution ends within $120$\,s &          35          &        68        &             102             \\ \hline
	Where at least one merge is performed   &          24          &        35        &             42              \\ \hline
	\end{tabular}
	
	\label{tab:dataset-size}
\end{table}

Experiments were run on an Intel Xeon Gold 5220 (Cascade Lake-SP, 2.20GHz, 1 CPU/node, 18 cores/CPU) with 96 GiB running Linux Ubuntu~20.\footnote{%
	We used \imitator{} \href{https://github.com/imitator-model-checker/imitator/releases/tag/v3.3.0-beta-2}{3.3-beta-2} ``Cheese Caramel au beurre salé''.
	Sources, binaries, models, raw results and full experiments tables are available at
	\href{https://www.doi.org/10.5281/zenodo.6806915}{\nolinkurl{10.5281/zenodo.6806915}}.
}
\subsection{Description of the Experiments}\label{subsection:experiments-description}

We compare each combination of the heuristics proposed in~\cref{subsec:options}. We reference each merge heuristic as a combination of three or four letters:
\begin{enumerate}
	\item \texttt{R} or \texttt{O}: the state-space is updated by reconstruction (\texttt{R}) or on-the-fly (\texttt{O}); %
	\item \texttt{V},\texttt{Q} or \texttt{O}: the selected candidates are \namevisited{} (\texttt{V}), \namequeue{} (\texttt{Q}) or \nameordered{} (\texttt{O}); %
	\item \texttt{M} or \texttt{C}: state-space is updated for each merge (\texttt{M}) or after all candidates (\texttt{C}); %
	\item \texttt{r}: the restart option is enabled (nothing otherwise). %
\end{enumerate}

These algorithms are compared according to:
\begin{ienumerate}
	\item the total computation time needed for a property; and
	\item the size of the generated state space.
\end{ienumerate}

Our results are obtained over the 102 executions of the dataset for which at least one algorithm ends before reaching the 120\,s timeout.
We do not use any penalty on executions that do not end: their execution time is set to the timeout (120\,s) in the subsequent analyses.
The metrics tagged by ``(merge)'' in \cref{tab:results-all-heuristic} are computed over the 42 executions
where some states can be merged,  while the ``(no merge)'' only consider the 60 executions where no merge can be made. %

\begin{table}[tb]	
	\centering
	\scriptsize
	
	\caption{Partial results comparing merge heuristics}
	\label{tab:results-all-heuristic}
	
	\begin{tabular}{|c|l|l|l|l|l|}
		\hline
		\rowHeader &&\nomerge&\mergeTwoTwelve&\mergeRVMr&\mergeOQM\\
		\hline
		\multirow{8}{*}{\rotatebox[origin=c]{90}{Time}}&\# wins&
			\cellThree{24}&20&\cellFive{22}&\cellOne{\cellBest{42}}\\
		&Avg (s)&10.0&\cellFive{5.47}&\cellThree{4.56}&\cellOne{\cellBest{3.77}}\\
		&Avg (merge) (s)&18.8&\cellFive{7.83}&\cellThree{5.57}&
			\cellOne{\cellBest{3.63}}\\
		&Avg (no merge) (s)&\cellThree{3.83}&\cellOne{\cellBest{3.82}}&
			\cellFive{3.85}&3.88\\
		&Median (s)&1.39&\cellFive{1.2}&\cellThree{1.14}&\cellOne{\cellBest{1.12}}\\
		&Norm.\ avg&1.0&\cellTwo{0.91}&\cellThree{0.91}&\cellOne{\cellBest{0.87}}\\
		&Norm.\ avg (merge)&1.0&\cellFive{0.75}&\cellThree{0.74}&
			\cellOne{\cellBest{0.64}}\\
		&Norm.\ avg (no merge)&\cellOne{\cellBest{1.0}}&\cellThree{1.02}&
			\cellFive{1.03}&\cellFive{1.03}\\
		\hline
		\multirow{5}{*}{\rotatebox[origin=c]{90}{States}}&\# wins&
			0&\cellThree{19}&\cellOne{\cellBest{37}}&\cellFive{16}\\
		&Avg&11443.08&\cellThree{11096.54}&\cellOne{\cellBest{11064.37}}&
			\cellFive{11120.79}\\
		&Avg (merge)&1512.02&\cellThree{670.43}&\cellOne{\cellBest{592.31}}&
			\cellFive{729.33}\\
		&Median&2389.5&\cellThree{703.5}&\cellOne{\cellBest{604.5}}&\cellFive{905.0}\\
		&Norm.\ avg&1.0&\cellThree{0.86}&\cellOne{\cellBest{0.84}}&\cellFive{0.88}\\
		\hline
	\end{tabular}
\end{table}

We present in~\cref{tab:results-all-heuristic} some of the experimental results obtained for the different merge heuristics that allow the best reduction of computation time or in the state-space size. The results for all the heuristics are presented in~\cref{appendix:results:results-all-heuristics}.
In order to allow a good visualization of the results, the best result in each cell is given in \textbf{bold}, while the level of green denotes the ``quality'' of the value in each cell (white is worst, and 100\,\% green is best).

The different lines tabulate the following information: %
\begin{ienumerate}
	\item the number of wins over the computation time, \ie{} the number of executions for which the current heuristics gives the smallest execution time;
	\item the average time (in s) over all executions;
	\item the average time (in s), excluding executions where no states can be merged for any heuristics;
	\item the average time (in s) for only the executions where no states can be merged for any heuristics;
	\item the median time (in s) over all executions;
	\item the normalized time average, compared to the \nomerge{} results, \ie{} the ratio between the heuristic execution time and the \nomerge{} one;
	\item the normalized time average, excluding executions where no states can be merged for any heuristics;
	\item the normalized time average, for only the executions where no states can be merged for any heuristics;
	\item the number of wins over the size of the state space (\ie{} the total number of symbolic states after merging);
	\item the average size of the state space over all models;
	\item the average size of the state space over all models, excluding the executions where no states can be merged;
	\item the median size of the state space over all models;
	\item the normalized size average, compared to the \nomerge{} results.
\end{ienumerate}

The reason to give both an average time (resp.\ number of states)
	and a normalized time (resp.\ number of states)
	is because both metrics complement each other:
the weight of the large models has a higher influence in the average (which can be seen as unfair, as a few models have a large influence), while all models have equal influence in the normalized average (which can also be seen as unfair, as very small models have the same influence as very large models).

In \cref{tab:results-all-heuristic} (and in~\cref{appendix:results:results-all-heuristics}), we notice that the best (\ie{} smallest) times are obtained when the merging is performed on the queue and when the update is done after a performed merge, even though doing it with a reconstruction of the state space after each step loses time compared to the \namefly{} heuristic.
Moreover, restarting when a merge is performed does not seem to bring any gain in time.
Thus, with respect to time, when the winner is \mergeOQM{} (\ie{} merging when the candidates are taken from the \namequeue{}, when the update is done on-the-fly after each merge without any restart), which minimizes both the time when a merge is possible, but also when considering models where no merge can be performed.
Moreover, this heuristic gives the smaller times for the executions where no merge can be done.

Concerning the state space size, the winner is \mergeRVMr{} (\ie{} merging when the candidates are taken from the \namevisited{} states, 
updating the state space by a reconstruction after each merge, and with a restart if a merge can be performed). This performs
more checks to identify states for merging (comparing with all the visited, not only those in the queue), thus reducing the state space even more. 

Note that the methods \nomerge{} and \mergeTwoTwelve{} are almost always the losers (\ie{} slowest and largest state space), except for the heuristic where the update of the state-space is performed after the list of candidates.
Concerning our new heuristics, we note that \mergeOQM{} decreases the average computation time to 69\,\% when compared to the previous merging heuristic (\mergeTwoTwelve{}~\cite{AFS13atva}), and even to only~46\,\% (\ie{} a division by a factor $> 2$) compared with \mergeTwoTwelve{} on the subset of models for which at least one merge can be done.
Compared to disabling merging (\nomerge{}), our new heuristic \mergeOQM{} decreases to 38\,\% on the whole benchmark set, and even to 19\,\% (\ie{} a division by a factor $> 5$) on the subset of models for which at least one merge can be done.
This leads us to consider the new combination of merging only in the queue and with an on-the-fly update after each merge and without restart (heuristic \mergeOQM{}) as the default merging heuristic in \imitator{}.
For use cases that require a minimal state space, the new combination \mergeRVMr{} is the recommended option.
Note that this version is still faster on average (83\,\%) than the previous heuristic (\mergeTwoTwelve),
and more than twice as fast (46\,\%) as not merging at all (\nomerge).

\section{Conclusion}\label{section:conclusion}

In this paper, we investigated the importance of the merging operations in reachability synthesis using parametric timed automata.
We investigated different combinations of options.
The chosen heuristic (\mergeOQM{}, when the candidates are taken from the \namequeue{}, when the update is done on-the-fly after each merge without any restart) brings a decrease to 38\,\% of the average computation time for our entire benchmarks library compared to the absence of merging.
Compared to the previous merging heuristic from~\cite{AFS13atva}, the gain of our new heuristic is a decrease to 69\,\% of the average computation time---meaning that our new heuristic decreases the computation time by~31\,\% compared to the former heuristic from~\cite{AFS13atva}.
In other words, despite the cost of the mergeability test, the overall gain is large and shows the importance of the merge operation for parameter synthesis.
We also provide a heuristic for use cases where a minimal state space is important, for instance for a follow-up analysis.
Even though this is not the fastest heuristic, it is still faster than not merging at all, and faster than the old merging heuristic~\cite{AFS13atva}.
Our experiments show the high importance of carefully choosing the merging heuristics.
Our heuristics preserve the correctness of parameter synthesis for reachability properties.

\paragraph{Future work.}
We noted that pruning merged states away (``garbage collection'') can prevent future merges.
Another option would be to keep such states in a collection of ``potential mergers''.
These extra states could be useful as ``glue'' to merge a number of other states, that otherwise could not be merged, into one superstate---but at the cost of more memory. Another option could be to
merge more than two states in one go.
These options remain to be investigated.

It is well-known that less heuristics can be used for \emph{liveness} properties than for reachability properties.
Investigating whether \emph{some} merging can still be used for liveness synthesis (\ie{} the synthesis of parameter valuations for which some location is infinitely often reachable) is an interesting future work.

Another, more theoretical question is to define and compute the ``best possible merge''. Currently, the result of merging is not canonical, since it depends on the exploration order 
and the order of searching for siblings. We have not found a candidate definition that
minimizes the state space and provides natural, canonical merge representatives.

Finally, investigating the recent PPLite~\cite{BZ18,BZ20} instead of PPL for polyhedra computation is on our agenda.

\section*{Acknowledgements}

We thank Benjamin Loillier for helping us testing our artifact.
Experiments presented in this paper were carried out using the Grid'5000 testbed, supported by a scientific interest group hosted by Inria and including CNRS, RENATER and several universities as well as other organizations (see \url{https://www.grid5000.fr}).

\newpage
\appendix
\section{Results for all Heuristics on the Full Benchmark}\label{appendix:results:results-all-heuristics}

\begin{center}
\rotatebox{-90}{%
\begin{minipage}{\textwidth}
	\scriptsize

\scalebox{.95}{
\begin{tabular}{|c|l|l|l|l|l|l|l|l|l|l|l|l|l|l|l|}
	\hline
	\rowHeader &With Restarting
&\nomerge&\mergeTwoTwelve&\mergeRVMr&\mergeRVCr&\mergeRQMr&\mergeRQCr&\mergeROMr&\mergeROCr&\mergeOVMr&\mergeOVCr&\mergeOQMr&\mergeOQCr&\mergeOOMr&\mergeOOCr\\
	\hline
	\multirow{8}{*}{\rotatebox[origin=c]{90}{Time}}&\# wins&
		\cellOne{\cellBest{20}}&\cellThree{17}&\cellFive{15}&2&10&3&5&3&9&4&8&8&4&8\\
	&Avg (s)&10.0&5.47&4.56&46.7&\cellFive{3.84}&43.53&4.69&49.05&5.58&5.7&
		\cellOne{\cellBest{3.79}}&\cellThree{3.81}&5.54&5.63\\
	&Avg (merge) (s)&18.8&7.83&5.57&17.8&\cellFive{3.83}&10.1&5.86&18.69&8.0&8.32&
		\cellOne{\cellBest{3.66}}&\cellThree{3.7}&7.89&8.14\\
	&Avg (no merge) (s)&\cellThree{3.83}&\cellOne{\cellBest{3.82}}&\cellFive{3.85}&
		66.93&\cellFive{3.85}&66.93&3.88&70.29&3.89&3.87&3.88&3.88&3.89&3.87\\
	&Median (s)&1.39&1.2&\cellFive{1.14}&4.85&\cellFive{1.14}&2.98&1.19&6.49&1.15&
		1.15&\cellThree{1.12}&\cellOne{\cellBest{1.11}}&1.15&1.17\\
	&Nrm.\ avg&1.0&\cellFive{0.91}&\cellFive{0.91}&9.06&\cellOne{\cellBest{0.87}}&
		8.91&0.92&10.34&0.92&0.92&\cellThree{0.88}&\cellOne{\cellBest{0.87}}&0.92&0.93\\
	&Nrm.\ avg (merge)&1.0&0.75&0.74&1.74&\cellThree{0.66}&1.4&0.76&1.86&
		\cellFive{0.75}&0.76&\cellThree{0.66}&\cellOne{\cellBest{0.65}}&0.77&0.78\\
	&Nrm.\ avg (no-mrg)&\cellOne{\cellBest{1.0}}&\cellThree{1.02}&\cellFive{1.03}&
		14.27&\cellThree{1.02}&14.25&\cellFive{1.03}&16.38&\cellFive{1.03}&
		\cellFive{1.03}&1.04&\cellFive{1.03}&\cellFive{1.03}&1.04\\
	\hline
	\multirow{5}{*}{\rotatebox[origin=c]{90}{States}}&\# wins&0&19&
		\cellOne{\cellBest{32}}&\cellFive{29}&15&15&\cellFive{29}&
		\cellThree{30}&20&20&16&16&20&20\\
	&Avg&11445.61&11096.54&\cellOne{\cellBest{11064.37}}&11106.09&11120.34&11120.55&
		\cellThree{11066.85}&11105.73&11089.5&11089.5&11118.73&11118.73&
		\cellFive{11087.77}&\cellFive{11087.77}\\
	&Avg (merge)&1518.17&670.43&\cellOne{\cellBest{592.31}}&693.62&728.24&728.74&
		\cellThree{598.33}&692.74&653.33&653.33&724.31&724.31&\cellFive{649.14}&
		\cellFive{649.14}\\
	&Median&2389.5&703.5&\cellOne{\cellBest{604.5}}&\cellThree{607.0}&905.0&905.0&
		\cellOne{\cellBest{604.5}}&\cellThree{607.0}&\cellFive{701.0}&\cellFive{701.0}&
		905.0&905.0&\cellFive{701.0}&\cellFive{701.0}\\
	&Nrm.\ avg&1.0&\cellFive{0.86}&\cellOne{\cellBest{0.84}}&\cellThree{0.85}&0.88&
		0.88&\cellOne{\cellBest{0.84}}&\cellThree{0.85}&\cellFive{0.86}&\cellFive{0.86}&
		0.88&0.88&\cellThree{0.85}&\cellThree{0.85}\\
	\hline
\end{tabular}
}
\medskip

\scalebox{.95}{
\begin{tabular}{|c|l|l|l|l|l|l|l|l|l|l|l|l|l|l|l|}
	\hline
	\rowHeader &No Restarting
&\nomerge&\mergeTwoTwelve&\mergeRVM&\mergeRVC&\mergeRQM&\mergeRQC&\mergeROM&\mergeROC&\mergeOVM&\mergeOVC&\mergeOQM&\mergeOQC&\mergeOOM&\mergeOOC\\
	\hline
	\multirow{8}{*}{\rotatebox[origin=c]{90}{Time}}&\# wins&
		\cellOne{\cellBest{20}}&\cellFive{17}&11&4&8&3&4&2&6&6&\cellThree{19}&6&8&8\\
	&Avg (s)&10.0&5.47&\cellFive{4.56}&46.48&3.85&43.31&4.65&48.92&5.09&5.19&
		\cellOne{\cellBest{3.77}}&\cellThree{3.78}&5.16&5.25\\
	&Avg (merge) (s)&18.8&7.83&5.58&17.19&\cellFive{3.86}&9.57&5.74&18.41&6.8&7.07&
		\cellOne{\cellBest{3.63}}&\cellThree{3.64}&7.0&7.24\\
	&Avg (no merge) (s)&\cellThree{3.83}&\cellOne{\cellBest{3.82}}&\cellFive{3.85}&
		66.97&\cellFive{3.85}&66.94&3.89&70.28&3.89&3.88&3.88&3.88&3.88&3.87\\
	&Median (s)&1.39&1.2&\cellFive{1.14}&4.34&\cellFive{1.14}&2.82&1.17&5.4&
		\cellTwo{1.13}&\cellOne{\cellBest{1.12}}&\cellOne{\cellBest{1.12}}&
		\cellOne{\cellBest{1.12}}&\cellTwo{1.13}&1.15\\
	&Nrm.\ avg &1.0&0.91&\cellFive{0.9}&9.04&\cellThree{0.88}&8.91&0.92&10.33&
		\cellFive{0.9}&0.91&\cellOne{\cellBest{0.87}}&\cellOne{\cellBest{0.87}}&
		0.91&0.92\\
	&Nrm.\ avg (merge)&1.0&0.75&0.74&1.69&\cellThree{0.65}&1.38&0.76&1.83&
		\cellFive{0.72}&0.74&\cellOne{\cellBest{0.64}}&\cellThree{0.65}&0.74&0.76\\
	&Nrm.\ avg (no-mrg)&\cellOne{\cellBest{1.0}}&\cellThree{1.02}&
		\cellThree{1.02}&14.28&1.04&14.27&\cellFive{1.03}&16.38&\cellFive{1.03}&
		\cellFive{1.03}&\cellFive{1.03}&\cellFive{1.03}&\cellFive{1.03}&1.04\\
	\hline
	\multirow{5}{*}{\rotatebox[origin=c]{90}{States}}&\# wins&
		0&19&23&\cellFive{26}&15&15&\cellThree{29}&\cellOne{\cellBest{32}}&
		19&19&15&15&19&19\\
	&Avg&11445.61&11096.54&\cellThree{11073.57}&11105.33&11123.2&11122.24&
		\cellOne{\cellBest{11068.0}}&11105.41&11096.89&11096.89&11120.79&11120.79&
		\cellFive{11090.01}&\cellFive{11090.01}\\
	&Avg (merge)&1518.17&670.43&\cellThree{614.64}&691.79&735.17&732.83&
		\cellOne{\cellBest{601.12}}&691.98&671.29&671.29&729.33&729.33&
		\cellFive{654.57}&\cellFive{654.57}\\
	&Median&2389.5&703.5&636.5&\cellThree{583.0}&905.0&905.0&\cellFive{604.5}&
		\cellOne{\cellBest{577.0}}&706.5&706.5&905.0&905.0&701.0&701.0\\
	&Nrm.\ avg&1.0&\cellFive{0.86}&\cellOne{\cellBest{0.84}}&\cellThree{0.85}&0.88&
		0.88&\cellOne{\cellBest{0.84}}&\cellThree{0.85}&\cellFive{0.86}&
		\cellFive{0.86}&0.88&0.88&\cellFive{0.86}&\cellFive{0.86}\\
	\hline
\end{tabular}
}
\end{minipage}
}
\end{center}

	\newcommand{\CCIS}{Communications in Computer and Information Science}
	\newcommand{\ENTCS}{Electronic Notes in Theoretical Computer Science}
	\newcommand{\FAC}{Formal Aspects of Computing}
	\newcommand{\FundInf}{Fundamenta Informaticae}
	\newcommand{\FMSD}{Formal Methods in System Design}
	\newcommand{\IJFCS}{International Journal of Foundations of Computer Science}
	\newcommand{\IJSSE}{International Journal of Secure Software Engineering}
	\newcommand{\IPL}{Information Processing Letters}
	\newcommand{\JAIR}{Journal of Artificial Intelligence Research}
	\newcommand{\JLAP}{Journal of Logic and Algebraic Programming}
	\newcommand{\JLAMP}{Journal of Logical and Algebraic Methods in Programming} %
	\newcommand{\JLC}{Journal of Logic and Computation}
	\newcommand{\LMCS}{Logical Methods in Computer Science}
	\newcommand{\LNCS}{Lecture Notes in Computer Science}
	\newcommand{\RESS}{Reliability Engineering \& System Safety}
	\newcommand{\STTT}{International Journal on Software Tools for Technology Transfer}
	\newcommand{\TCS}{Theoretical Computer Science}
	\newcommand{\ToPNoC}{Transactions on Petri Nets and Other Models of Concurrency}
	\newcommand{\TSE}{{IEEE} Transactions on Software Engineering}

\newpage

\renewcommand*{\bibfont}{\small}
\printbibliography[title={References}]

\end{document}